\title{Harmonic Analysis of Boolean Networks: Determinative Power and Perturbations}
\author{Reinhard Heckel\thanks{Department of Information Technology and Electrical Engineering, ETH Zurich}, \; 
Steffen Schober\thanks{Institute of Telecommunications and Applied Information Theory, University of Ulm} \;
and Martin Bossert${}^{\dagger}$
}
\date{}
\renewcommand{\vspace}[1]{}
\date{}
\newcommand\vect[1]{\ensuremath{\mathbf{ #1}}}
\newcommand\XA[0]{\mathbf{X}_A }
\newcommand\xa[0]{\mathbf{x}_A }
\newcommand\PR[1]{\ensuremath{ {\mathrm{P}\!}\left[#1\right]}}
\newcommand\MI[2]{\mathrm{MI}(#1 ; #2 )}
\newcommand\INF[2]{I_{#1}(#2)}
\newcommand\PFC[0]{\hat f }
\newcommand\OBF[0]{ \Phi}
\newcommand\Tex{}
\newcommand\EX[2][\Tex]{
\ifthenelse{\equal{#1}{}}{{\mathbb E}[#2]}{\ensuremath{\underset{#1}{\mathbb E}\left[ #2\right]}}}
\newcommand\VAR[1]{{\mathrm{Var}}\left(#1\right)}
\newcommand\IPP[2]{ \langle #1 , #2 \rangle } 
\newcommand\AV[1]{\mathrm{as}(#1)}
\newcommand\defeq{\triangleq}
\newcommand\norm[2][\Tnorm]{\ensuremath{{\|#2\|}_{#1}}}
\newcommand\ind[1]{\ensuremath{ \mathbf 1_{\{ #1 \}} }}
\newtheorem{lemma}{Lemma}
\newtheorem{corollary}{Corollary}
\newtheorem{theorem}{Theorem}
\newtheorem{proposition}{Proposition}
\theoremstyle{bmcstyleone}
\newtheorem{definition}{Definition}
\begin{document}
\maketitle

\begin{abstract}
Consider a large Boolean network with a feed forward structure. Given a probability distribution on the inputs, can one find, possibly small, collections of input nodes that determine the states of most other nodes in the network? To answer this question, a notion that quantifies the \textit{determinative power} of an input over the states of the nodes in the network is needed. We argue that the mutual information (MI) between a given subset of the inputs $\vect{X} = \{X_1, ...,\allowbreak X_n\}$ of some node $i$ and its associated function $f_i(\vect{X})$ quantifies the determinative power of this set of inputs over node $i$. We compare the determinative power of a set of inputs to the sensitivity to perturbations to these inputs, and find that, maybe surprisingly, an input that has large sensitivity to perturbations does not necessarily have large determinative power. However, for \textit{unate} functions, which play an important role in genetic regulatory networks, we find a direct relation between MI and sensitivity to perturbations. As an application of our results, we analyze the large-scale regulatory network of \textit{Escherichia coli}. We identify the most determinative nodes and show that a small subset of those reduces the overall uncertainty of the network state significantly. Furthermore, the network is found to be tolerant to perturbations of its inputs.
\end{abstract}

\section{Introduction}
A Boolean network (BN) is a discrete dynamical system, which is, for example, used to study and model a variety of biochemical networks such as genetic regulatory networks. BNs have been introduced in the late 1960s by Kauffman \cite{kauffman_metabolic_1969,kauffman_homeostasis_1969} who proposed to study random BNs as models of gene regulatory networks. Kauffman investigated their dynamical behavior and a phenomena called self-organization. Aside from its original purpose, BNs were also used to model (small-scale) genetic regulatory networks; for example, in \cite{davidich_boolean_2008,saez-rodriguez_logical_2007,li_yeast_2004}, it was demonstrated that BNs are capable of reproducing the underlying biological processes (i.e., the cell cycle) well. BNs are also used to model large-scale networks, such as the \textit{Escherichia coli} regulatory network \cite{covert_integrating_2004} which is analyzed in section~\ref{sec:ecoli}. This network is, in contrast to Kauffman's automata and the regulatory networks considered in \cite{davidich_boolean_2008,saez-rodriguez_logical_2007,li_yeast_2004}, not an autonomous system, since the gene's states are determined by external factors.

In the literature addressing the analysis of BNs, it is common to consider measures that quantify the effect of perturbations. Whether a random BN operates in the so called ordered or disordered regime is determined by whether a single perturbation, i.e., flipping the state of a node, is expected to spread or die out eventually. Kauffman \cite{kauffman_homeostasis_1969} argues that biological networks must operate at the border of the ordered and disordered regime; hence, they must be tolerant to perturbations to some extent.

In contrast to measures of perturbations, determinative power in BNs has not received much attention, even though there are several settings where such a notion is of interest. For example, given a feed forward network where the states of the nodes are controlled by the states of nodes in the input layer, we might ask whether a possibly small set of inputs suffices to determine most states, i.e., reduces the uncertainty about the network's states significantly. This can be addressed by quantifying the determinative power of the input nodes. For example, in the \textit{E. coli} regulatory network, it turns out that a small set of metabolites and other inputs determine most genes that account for \textit{E. coli}'s metabolism (see section~\ref{sec:ecoli}).

In this paper, we view the state of each node in the network as an independent random variable. This modeling assumption applies for networks with a tree-like topology, e.g., a feed forward network, and is often applied when studying the effect of perturbations. For this setting, determinative power of nodes and perturbation-related measures are properties of single functions; hence, the analysis of the BN reduces to the analysis of single functions. Our main tool for the theoretical results is Fourier analysis of Boolean functions. Fourier analytic techniques were first applied to BNs by Kesseli et al.  \cite{kesseli_spectral_2005,kesseli_tracking_2005}. In \cite{kesseli_spectral_2005,kesseli_tracking_2005}, results related to Derrida plots and convergence of trajectories in random BNs were derived. Ribeiro et al. \cite{ribeiro_mutual_2008} considered the pairwise mutual information in time series of random BNs, under a different setup that we use. Specifically, in \cite{ribeiro_mutual_2008}, the functions are random; whereas here, the functions are deterministic, but the argument is random. Finally, note that part of this paper was presented at the 2012 International Workshop on Computational Systems Biology \cite{heckel_determinative_2012}.

\subsection{Contributions}
Mutual information between a set of inputs to a node and the state of this node is a measure of the determinative power of this set of inputs, as mutual information quantifies mutual dependence of random variables. In order to understand the determinative power and mutual dependencies in Boolean networks, we systematically study the mutual information of sets of inputs and the state of a node. We relate mutual information to a measure of perturbations and prove that (maybe surprisingly) a set of inputs that is highly sensitive to perturbations might not necessarily have determinative power. Conversely, a set of inputs which has determinative power must be sensitive to perturbations. To prove those results, we show that the concentration of weight in the Fourier domain on certain sets of inputs characterizes a function in terms of tolerance to perturbations and determinative power of input nodes. Furthermore, we generalize a result by Xiao and Massey \cite{xiao_spectral_1988}, which gives a necessary and sufficient condition of statistical independence of a set of inputs and a function's output in terms of the Fourier coefficients. This result can for instance be applied to decide for which classes of functions the algorithm presented in \cite{liang_reveal_1998}, which detects functional dependencies based on estimating mutual information, can succeed or fails. For \textit{unate} functions, we show that any input and the function's output are statistically dependent and provide a direct relation between the mutual information and the influence of a variable. The class of unate functions is especially relevant for biological networks, as it includes all linear threshold functions and all nested canalizing functions, and describes functional dependencies in gene regulatory networks well \cite{grefenstette_analysis_2006}. As an application of the theoretical results in this paper, we show that mutual information can be used to identify the determinative nodes in the large-scale model of the control network of \textit{E. coli}'s metabolism \cite{covert_integrating_2004}.

\subsection{Outline}
The paper is organized as follows. Boolean networks and Fourier analysis of Boolean functions are reviewed in section~\ref{sec:preliminaries}. In section~\ref{sec:inf}, the influence and average sensitivity as measures of perturbations are reviewed, and their relation to the Fourier spectrum is discussed. In section~\ref{sec:mi}, we study the mutual information of sets of inputs and the function's output. Section~\ref{sec:unate} is devoted to unate functions. Section~\ref{sec:ecoli} contains an analysis of the large-scale \textit{E. coli} regulatory network, using the tools and ideas developed in previous sections.

\section{Preliminaries}\label{sec:preliminaries}
\newcommand{\N}{n}

We start with a short introduction to Boolean networks and Fourier analysis of Boolean functions, and introduce notation.

\subsection{Boolean networks}
A (synchronous) BN can be viewed as a collection of $\N$ nodes with memory. The state of a node $i$ is described by a binary state $x_{i}(t) \in \{-1,+1\}$ at discrete time $t \in \mathbb N$. Choosing the alphabet to be $\{-1,+1\}$ rather than $\{0,1\}$ as more common in the literature on BNs will turn out to be advantageous later. However, both choices are equivalent. The state of the network at time $t$ can be described by the vector $\vect{x}(t)= [x_1(t),...,x_n(t)] \in \{-1,+1\}^\N$. The network dynamic is defined by 
\begin{equation}\label{eq:updating}
x_{i}(t+1) = f_{i}(\vect{x}(t)),
\end{equation}
where $f_{i}\colon \{-1,+1\}^\N \to \{-1,+1\}$ is the Boolean function associated with node $i$. At time $t=0$, an initial state $\vect{x}(0)=\vect{x}_0$ is chosen. In general, not all arguments $x_1,...,x_\N$ of a function $f_{i}(\vect{x})$ need to be \textit{relevant}. The  variable $x_j, j\in \{1,...,\N\}$ is said to be relevant for $f_{i}$ if there exists at least one $\vect{x} \in \{-1,+1\}^\N$, such that changing $x_j$ to $-x_j$ changes the function's value. In most of the BN models in biology, the functions depend on a small subset of their arguments only. Furthermore, not every state must have a function associated with it; states can also be external inputs to the network.

To study the determinative power and tolerance to perturbations, a probabilistic setup is needed. In our analysis, we assume that each state is an independent random variable $X_{i}$ with distribution $\PR{X_{i} = x_{i} }$, $x_i\in \{-1,+1\}$. The assumption of independence holds for networks with tree-like topology, but is not feasible for networks with strong local dependencies and feedback loops. However, in many relevant settings, a BN has a tree-like topology, for instance the \textit{E. coli} network analyzed in section~\ref{sec:ecoli}. For a network with few local dependencies, assuming independence will lead to a small modeling error. Major results concerning the analysis of BNs have been obtained under the assumptions as stated above, e.g., the annealed approximation \cite{derrida_random_1986}, an important result on the spread of perturbations in random BNs. Several important results on random BNs, e.g., \cite{derrida_random_1986}, let the network size $n$ tend to infinity; hence, there are no local dependencies.

\subsection{Notation}

We use $[n]$ for the set $\{1,2,...,n\}$, and all sets are subsets of $[n]$. With $\sum_{S\subseteq A}\left( \cdot \right)$, we mean the sum over all sets $S$ that are subsets of $A$.
Throughout this paper, we use capital letters for random variables, e.g., $X$, and lower case letters for their realizations, e.g., $x$. Boldface letters denote vectors, e.g., $\vect{X}$ is a random vector, and $\vect{x}$ its realization. For a vector $\vect{x}$ and a set $A\subseteq [n]$, $\vect{x}_A$ denotes the subvector of $\vect{x}$ corresponding to the entries indexed by $A$.

\subsection{Fourier analysis of Boolean functions }\label{sec:fanalysisintro}

In the following, we give a short introduction to Fourier analysis of Boolean functions. Let $\vect{X} = (X_1, ..., X_n)$ be a binary, product distributed random vector, i.e., the entries of $\vect{X}$ are independent random variables $X_i, i \in [n]$ with distribution $\PR{X_i = x_i}, x_i \in \{-1,+1\}$. Throughout this paper, probabilities $\PR{\cdot}$ and expectations $\EX{\cdot}$ are with respect to the distribution of $\vect{X}$. We denote $p_i \defeq\PR{X_i=1}$, the variance of $X_i$ by $\VAR{X_i}$, its standard deviation by $\sigma_i\defeq\sqrt{\VAR{X_i}}$ and finally $\mu_i\defeq\EX{X_i}$. The inner product of $f,g: \{-1,+1\}^n \to \mathbb \{-1,+1\}$ with respect to the distribution of $\vect{X}$ is defined as
\begin{equation}
\IPP{f}{g} \defeq \EX{f(\vect{X}) g(\vect{X})} = \!\!\!\! \sum_{\vect{x} \in \{-1,1\}^n} \!\!\! \PR{\vect{X}=\vect{x}}  f(\vect{x}) g(\vect{x})
\label{eq:inner_prod}
\end{equation}
which induces the norm $\norm{f} = \sqrt{\IPP{f}{f}}$.
An orthonormal basis with respect to the distribution of $\vect{X}$ is
\begin{equation}
\OBF_{S}(\vect{x}) = \prod_{i \in S} \frac{x_i-\mu_i}{\sigma_i}, \quad S \subseteq [n] \setminus \emptyset
\label{eq:basisfunc}
\end{equation}
and
\[
\OBF_{S}(\vect{x}) = 1, \quad S = \emptyset.
\]
This basis was first proposed by Bahadur \cite{bahadur_representation_1961}.
Thus, each Boolean function $f\colon\{-1,+1\}^n \to  \{-1,+1\}$ can be uniquely expressed as
\begin{equation}
f(\vect{x}) = \sum_{S \subseteq [n]} \PFC (S) \OBF_S(\vect{x}),
\label{eq:fourier_expansion}
\end{equation}
where $\PFC (S) \defeq \IPP{f}{\OBF_S}$ are the Fourier coefficients of $f$. Note that \eqref{eq:fourier_expansion} is a representation of $f$ as a multilinear polynomial. As an example, consider the AND2 function defined as $f_{\mathrm{AND2}}(\vect{x}) = 1$ if and only if $x_1=x_2=1$, and let $p_1=p_2=1/2$. According to \eqref{eq:fourier_expansion}
\[
    f_{\mathrm{AND}}(\vect{x}) = -\frac{1}{2} + \frac{1}{2}x_1  + \frac{1}{2}x_2 + \frac{1}{2}x_1 x_2.
\]
As a second example consider PARITY2, i.e., the XOR function, defined as $f_{\mathrm{PARITY2}}(\vect{x}) = 1$ if $x_1=x_2=1$ or if $x_1=x_2=-1$, and  $f_{\mathrm{PARITY2}}(\vect{x}) = -1$ for all other choices of $\vect{x}$. Written as a polynomial, $f_{\mathrm{PARITY2}}(\vect{x}) = x_1 x_2$. We conclude this section by listing properties of the basis functions which are used frequently throughout this paper.

 {\bf Decomposition:} Let $A\subseteq [n]$ and $S \subset A$, and denote $\bar S = A \setminus S$. Then,
\[
\OBF_A(\vect{x}) = \OBF_S(\vect{x})  \OBF_{\bar S}(\vect{x}).
\]

{\bf Orthonormality:} For $A,B \subseteq [n]$,
\[
\EX{\OBF_A(\vect{X})  \OBF_B(\vect{X})} =
 \begin{cases} 1, \text{if } A=B \\
               0, \text{otherwise}.
\end{cases}
\]

{\bf Parseval's identity:} For $f\colon \{-1,+1\}^n \to \{-1,+1\}$,
\[
\EX{f(\vect{X})^2} = \norm{f}^2 = \sum_{S \subseteq [n]} \PFC (S)^2 = 1.
\]

\section{Influence and average sensitivity}\label{sec:inf}

Next, we discuss measures of perturbations and their relation to the Fourier spectrum. We start with a measure of the perturbation of a single input.

\begin{definition}[{\cite{ben-or_collective_1985}}]
Define the influence of variable $i$ on the function $f$ as
\[
I_i(f) = \PR{f(\vect{X}) \neq f(\vect{X} \oplus e_i)},
\]
where $\vect{x}\oplus e_i$ is the vector obtained from $\vect{x}$ by flipping its $i$th entry.
\end{definition}

By definition, the influence of variable $i$ is the probability that perturbing, i.e., flipping, input $i$ changes the function's output. Influence can be viewed as the capability of input $i$ to change the output of $f$. In BNs, usually, the sum of all influences, i.e., the \textit{average sensitivity} is studied.

\begin{definition} The average sensitivity of $f$ to the variables in the set $A$ is defined as
\[
\INF{A}{f} = \sum_{i \in A} I_i(f).
\]
The average sensitivity of $f$ is defined as $\AV{f} \defeq \INF{\{1,...,n\}}{f}$.
\end{definition}

$\INF{A}{f}$ captures whether flipping an input chosen uniformly at random from $A$ affects the function's output. Most commonly, all inputs are taken into account, i.e., the average sensitivity $\AV{f}$ is studied. As an example, $\AV{f_{\mathrm{PARITY2}}} = 2$ and $\AV{f_{\mathrm{AND2}}} = 1$; hence, PARITY2 is more sensitive to single perturbations than AND2. Influence and average sensitivity have the following convenient expressions in terms of Fourier coefficients.

\begin{proposition}[Lemma 4.1 of \cite{bshouty_fourier_1996}]
 For any Boolean function $f$,
\begin{align}
 \INF{i}{f} = \frac{1}{\sigma_i^2} \sum_{S\subseteq [n]\colon i\in S} \PFC(S)^2.
\label{eq:infinteroffcoef}
\end{align}
\label{prop:influenc_spectrum}
\end{proposition}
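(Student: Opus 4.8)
The plan is to express the finite difference $f(\vect{x}) - f(\vect{x}\oplus e_i)$ through the Fourier expansion~\eqref{eq:fourier_expansion}, relate the influence to the squared $L^2$-norm of this difference, and then collapse the resulting sum using orthonormality. First I would track how flipping the $i$th coordinate acts on each basis function. By the Decomposition property, for $i\in S$ we may write $\OBF_S(\vect{x}) = \tfrac{x_i-\mu_i}{\sigma_i}\OBF_{S\setminus i}(\vect{x})$, where $\OBF_{S\setminus i}$ does not depend on $x_i$; for $i\notin S$ the factor $\OBF_S$ is unaffected by the flip. Replacing $x_i$ by $-x_i$ and subtracting, all terms with $i\notin S$ cancel, and for $i\in S$ the surviving scalar factor is $\tfrac{x_i-\mu_i}{\sigma_i} - \tfrac{-x_i-\mu_i}{\sigma_i} = \tfrac{2x_i}{\sigma_i}$, so that
\[
f(\vect{x}) - f(\vect{x}\oplus e_i) = \frac{2x_i}{\sigma_i}\sum_{S\subseteq[n]\colon i\in S}\PFC(S)\,\OBF_{S\setminus i}(\vect{x}).
\]

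Next I would exploit that $f$ is $\{-1,+1\}$-valued. The difference $f(\vect{x})-f(\vect{x}\oplus e_i)$ equals $0$ when the two outputs agree and $\pm 2$ otherwise, hence $\big(\tfrac12(f(\vect{x})-f(\vect{x}\oplus e_i))\big)^2 = \ind{f(\vect{x})\neq f(\vect{x}\oplus e_i)}$. Taking expectations with respect to $\vect{X}$ rewrites the influence as a second moment,
\[
\INF{i}{f} = \EX{\Big(\tfrac12\big(f(\vect{X}) - f(\vect{X}\oplus e_i)\big)\Big)^2}.
\]

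Substituting the expression from the first step, and using $X_i^2 = 1$ since $X_i\in\{-1,+1\}$, the prefactor $\tfrac{x_i^2}{\sigma_i^2}$ reduces to $\tfrac{1}{\sigma_i^2}$, leaving
\[
\INF{i}{f} = \frac{1}{\sigma_i^2}\,\EX{\Big(\sum_{S\colon i\in S}\PFC(S)\,\OBF_{S\setminus i}(\vect{X})\Big)^2}.
\]
Expanding the square and applying Orthonormality finishes the argument: the map $S\mapsto S\setminus i$ is injective on sets containing $i$, so the functions $\{\OBF_{S\setminus i}\}_{i\in S}$ are pairwise orthonormal, and the double sum collapses to $\sum_{S\colon i\in S}\PFC(S)^2$, which is the claimed identity.

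The step I expect to be the main obstacle is the first one, namely correctly tracking the effect of the coordinate flip on the shifted and scaled basis $\OBF_S$. Unlike the uniform-measure parity characters, these basis functions carry the means $\mu_i$ and standard deviations $\sigma_i$, and one must verify that the $\mu_i$ contributions cancel in the difference while the factor $\tfrac{2x_i}{\sigma_i}$ emerges cleanly. The conceptually decisive point is the identification of the disagreement indicator with the squared half-difference, which relies essentially on $f$ taking values in $\{-1,+1\}$; once that is in place, the remaining calculation is a routine application of the orthonormality of the basis.
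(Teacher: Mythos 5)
Your proof is correct. Note that the paper itself does not prove this proposition at all: it is imported as Lemma 4.1 of \cite{bshouty_fourier_1996}, so there is no in-paper argument to compare against. Your derivation is the standard one (and essentially the argument used in that reference, adapted to the Bahadur basis): the coordinate flip acts only on the factor $(x_i-\mu_i)/\sigma_i$ of those $\OBF_S$ with $i\in S$, producing the clean factor $2x_i/\sigma_i$ while the $\mu_i$ terms cancel and the $i\notin S$ terms drop out; the identification $\ind{f(\vect{x})\neq f(\vect{x}\oplus e_i)} = \bigl(\tfrac12(f(\vect{x})-f(\vect{x}\oplus e_i))\bigr)^2$ is valid precisely because $f$ is $\{-1,+1\}$-valued; the pointwise identity $x_i^2=1$ turns the random prefactor into the deterministic constant $1/\sigma_i^2$; and the final collapse is justified since $S\mapsto S\setminus\{i\}$ is injective on sets containing $i$, so orthonormality of the basis kills all cross terms. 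All steps check out, and the proof is complete as written.
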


\begin{proposition}
For any Boolean function $f$,
\begin{equation}
 \INF{A}{f} = \sum_{S\subseteq [n]} \PFC(S)^2 \sum_{ i\in S \cap A } \frac{1}{\sigma_i^2}.
\label{eq:av_fourier}
\end{equation}
\label{pr:av_fourier}
\end{proposition}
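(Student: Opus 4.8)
The plan is to reduce this directly to the single-variable case already established in Proposition~\ref{prop:influenc_spectrum}. By definition, the average sensitivity of $f$ to the variables in $A$ is the sum of the individual influences, $\INF{A}{f} = \sum_{i \in A} \INF{i}{f}$, so the only work is to substitute the Fourier-analytic expression for each $\INF{i}{f}$ and then reorganize the resulting double sum.

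First I would invoke Proposition~\ref{prop:influenc_spectrum} to write, for each $i \in A$,
\[
\INF{i}{f} = \frac{1}{\sigma_i^2} \sum_{S\subseteq [n]\colon i\in S} \PFC(S)^2,
\]
and sum this identity over all $i \in A$, giving
\[
\INF{A}{f} = \sum_{i \in A} \frac{1}{\sigma_i^2} \sum_{S\subseteq [n]\colon i\in S} \PFC(S)^2.
\]
Next I would interchange the order of the two (finite) summations, collecting the contribution of each set $S$. A term $\PFC(S)^2/\sigma_i^2$ appears in the original sum precisely when $i \in A$ and $i \in S$, i.e. when $i \in S \cap A$. Hence grouping by $S$ yields
\[
\INF{A}{f} = \sum_{S\subseteq [n]} \PFC(S)^2 \sum_{i \in S \cap A} \frac{1}{\sigma_i^2},
\]
which is the claimed identity.

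Since all index sets here are finite subsets of $[n]$, the interchange of summation requires no convergence justification, so there is essentially no analytic obstacle. The only point that warrants care is bookkeeping the index condition when swapping the sums: one must verify that the joint constraint $i \in A$ and $i \in S$ is exactly $i \in S \cap A$, so that the inner sum in the final expression ranges over $S \cap A$ rather than over $S$ or $A$ alone. This is the step where an indexing slip would most easily occur, and it is the part I would state explicitly.
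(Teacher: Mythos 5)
Your proof is correct and follows the same route as the paper, which derives Proposition~\ref{pr:av_fourier} directly from Proposition~\ref{prop:influenc_spectrum} and the definition of $\INF{A}{f}$; your explicit interchange of the finite sums, with the bookkeeping that $i \in A$ and $i \in S$ is exactly $i \in S \cap A$, just spells out the step the paper leaves implicit.
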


Proposition~\ref{pr:av_fourier} follows directly from Proposition~\ref{prop:influenc_spectrum} and the definition of $\INF{A}{f}$. From \eqref{eq:av_fourier}, we see that $\AV{f}$ is large if the Fourier weight is concentrated on the coefficients of high degree $d= |S|$, i.e., if $\sum_{S \colon |S| \geq d} \PFC(S)^2$ is large (i.e., close to one). For this case, Parseval's identity implies that the $\PFC(S)^2$ with $|S| < d$ must be small. Let's see an example: Suppose $p_1=p_2=p_3=1/2$ and consider the AND3 function, i.e., $f_{\mathrm{AND3}}(x_1,x_2,x_3) = 1$ if and only if $x_1=x_2=x_3=1$.  $f_{\mathrm{AND3}}$ is tolerant to perturbations since $\AV{f_{\mathrm{AND3}}}=0.75$, and as Figure \ref{fig1} shows, its spectrum is concentrated on the coefficients of low degree. In contrast for $f_{\mathrm{PARITY3}}(x_1,x_2,x_3) \defeq x_1 x_2 x_3$, $\AV{f_{\mathrm{PARITY}}}=3$. Hence, PARITY3 is maximally sensitive to perturbations. Figure \ref{fig1} shows that its spectrum is maximally concentrated on the coefficient of highest degree.

\begin{figure}
\begin{center}
\includegraphics[width=0.35\textwidth]{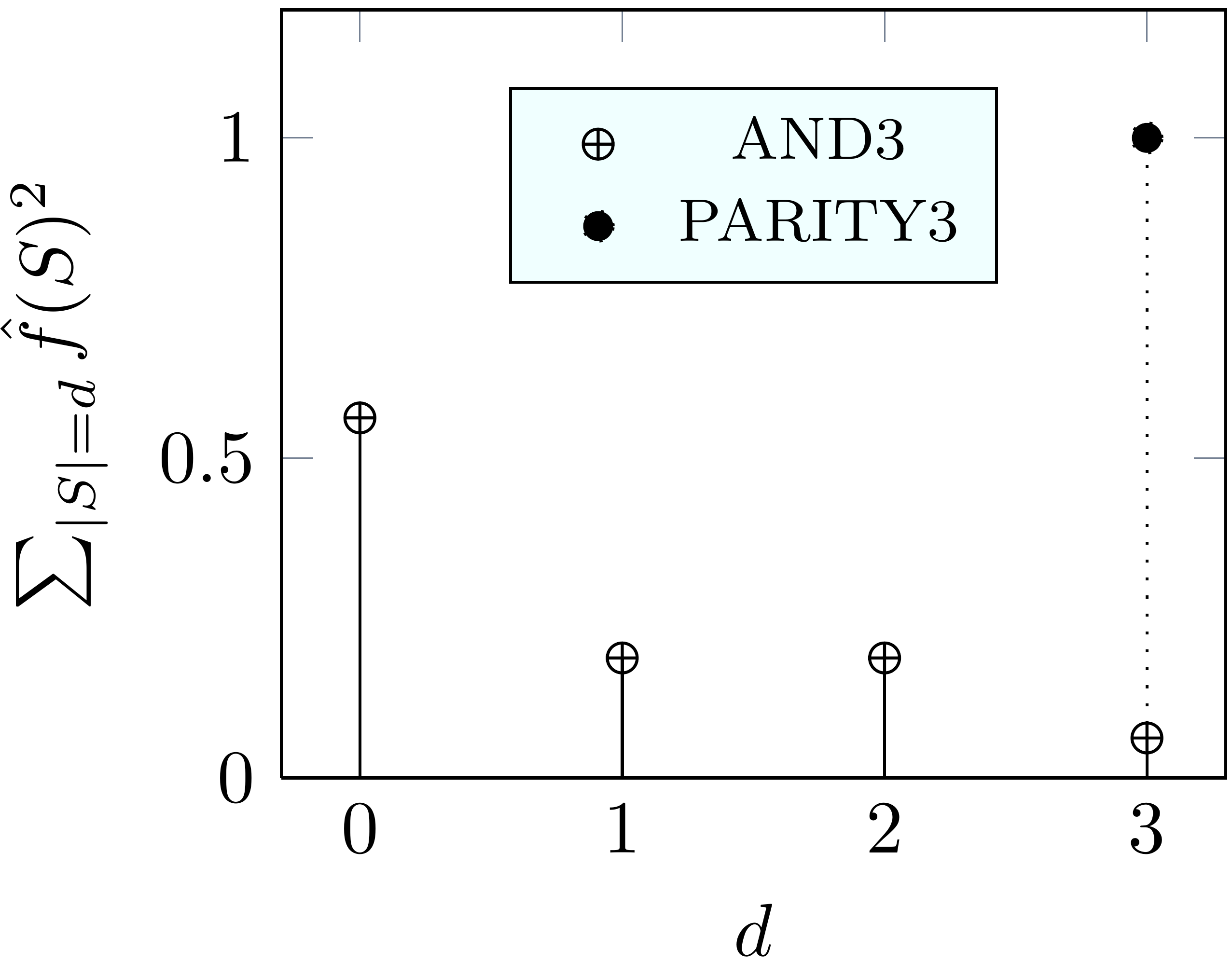}
\end{center}
\caption{The Fourier spectrum of AND3 and PARITY3.} \label{fig1} 
\end{figure}

According to \eqref{eq:av_fourier} $\AV{f}$ is small only if the Fourier weight is concentrated on the coefficients of low degree. This is the case either if $f$ is strongly biased (i.e., if $f(\vect{x}) = a$, for most inputs $\vect{x}$, where  $a \in \{-1,1\}$ is a constant) or if $f$ depends on few variables only.
This is in accordance with the results of Kauffman \cite{kauffman_metabolic_1969}; he found that a random BN operates in the ordered regime if the functions in the network depend on average on few variables.

We will state our result for measures of single perturbations. However, these results also apply to other noise models, specifically to the \textit{noise sensitivity} of $f$. That is, because the noise sensitivity of $f$ is small if $f$ is tolerant to single perturbations. The noise sensitivity of a Boolean function is defined as the probability that the function's output changes if each input is flipped independently with probability $\epsilon$. For uniformly distributed $\vect{X}$, $\epsilon \,\AV{f}$ is an upper bound for the noise sensitivity; for small values of $\epsilon$, $\epsilon \, \AV{f}$ approximates the noise sensitivity well. For the $X_i$ being equally but possibly nonuniformly distributed and a slightly different noise model, it was found in \cite{schober_about_2008} that $\epsilon\, \AV{f}$ still upper bounds the noise sensitivity. This result was generalized to product distributed $\vect{X}$ in \cite{matache_sensitivity_2009}.

\section{Mutual information and uncertainty}\label{sec:mi}

In this section, we study the determinative power of a subset of variables $\XA$, where $\XA$ consists of the entries of $\vect{X}$ corresponding to the indices in the set $A\subseteq [n]$, over the function's output $f(\vect{X})$. As a measure of determinative power, we take the mutual information $\MI{f(\vect{X})}{\XA}$ between $f(\vect{X})$ and $\XA$, since $\MI{f(\vect{X})}{\XA}$ quantifies the statistical dependence between the random variable $\XA$ and $f(\vect{X})$. Hence, this section is devoted to the study of $\MI{f(\vect{X})}{\XA}$.

Before giving a formal definition of mutual information, let us start with an example. Consider the PARITY2 function and let its inputs $X_1,X_2$ be uniformly distributed. Intuitively, if $X_1$ has determinative power, knowledge about $X_1$ should provide us with information about $f_{\mathrm{PARITY2}}(\vect{X})$. Suppose we know the value of $X_1$, say $X_1 = 1$. Since $f_{\mathrm{PARITY2}}(\vect{x}) = x_1x_2$, we have with $\PR{X_2=1}=1/2$ that $\PR{f_{\mathrm{PARITY2}}(\vect{X}) = 1} = \PR{f_{\mathrm{PARITY2}} = 1 |X_1=1}$. Hence, knowledge of $X_1$ does not help to predict the value of $f_{\mathrm{PARITY2}}(\vect{X})$. Therefore, $X_1$ has no determinative power over $f_{\mathrm{PARITY2}}(\vect{X})$. We indeed have $\MI{f_{\mathrm{PARITY2}}(\vect{X})}{X_1} = 0$.

We next define mutual information. Mutual information is the reduction of uncertainty of a random variable $Y$ due to the knowledge of  $X$; therefore, we need to define a measure of uncertainty first, which is entropy. As a reference for the following definitions, see \cite{cover_elements_2006}.

\begin{definition}
The entropy $H(X)$ of a discrete random variable $X$ with alphabet $\mathcal X$ is defined as
\[
H(X) \defeq -\sum_{x \in \mathcal X} \PR{X=x} \log_2  \PR{X=x}.
\]
\end{definition}

\begin{definition}
The conditional entropy $H(Y | X)$ of a pair of discrete and jointly distributed random variables $(Y,X)$ is defined as
\[
H(Y|X) \defeq \sum_{x \in \mathcal X} \PR{X=x} H (Y | X=x) .
\]
\end{definition}

\begin{definition}
The mutual information $\MI{Y}{X}$ is the reduction of uncertainty of the random variable $Y$ due to the knowledge of $X$
\[
\MI{Y}{X}\defeq H(Y) - H(Y | X).
\]
\end{definition}

For a binary random variable $X$ with alphabet $\mathcal{X} = \{x_1,x_2\}$ and $p\defeq \PR{X=x_1}$, we have $H(X) = h (p)$, where $h(p)$ is the binary entropy function, defined as
\begin{equation}
h(p) \defeq - p \log_2 p - (1-p) \log_2 (1-p).
\label{eq:BEF}
\end{equation}
The properties of mutual information are what we intuitively expect from a measure of determinative power: If knowledge of $X_i$ reduces the uncertainty of $f(\vect{X})$, then  $X_i$ determines the state of $f(\vect{X})$ to some extent, because then, knowledge about the state of $X_i$ helps in predicting $f(\vect{X})$. Furthermore, we require from a measure of determinative power that not all variables can have large determinative power simultaneously. This is guaranteed for mutual information as
\begin{equation}
\sum_{i=1}^n \MI{f(\vect{X})}{X_i}  \leq \MI{f(\vect{X})}{\vect{X}} \leq 1,
\label{eq:sum_of_mis}
\end{equation}
which follows from the chain rule of mutual information (as a reference, see \cite{cover_elements_2006}) and independence of the $X_i, i\in [n]$. Hence, if $\MI{f(\vect{X})}{X_i}$ is large, i.e., close to 1, we can be sure that $X_i$ has determinative power over $f(\vect{X})$ since \eqref{eq:sum_of_mis} implies that $\MI{f(\vect{X})}{X_j}$ for $j \neq i$  must be small then.

\subsection{Mutual information and the Fourier spectrum}\label{sec:reltov}

In order to study determinative power, its relation to measures of perturbations, and statistical dependencies, we start by characterizing the mutual information in terms of Fourier coefficients.
Our results are based on the following novel characterization of entropy in terms of Fourier coefficients.

\begin{theorem}
Let $f$ be a Boolean function, let $\vect{X}$ be product distributed, and
let $\XA=\{X_i\colon i\in A\}$ be a fixed set of arguments, where $A \subseteq [n]$.
Then,
\[
H(f(\vect{X}) | \XA) = \mathbb E \left[ h\left(\frac{1}{2} \left(1+ \sum_{S \subseteq A} \PFC(S) \OBF_{S}(\XA) \right) \right)  \right],
\]
where $h(\cdot)$ is the binary entropy function as defined in \eqref{eq:BEF}.
\label{th:relation_fourier_H}
\end{theorem}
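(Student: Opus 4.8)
The plan is to compute the conditional entropy directly from its definition, reducing everything to the conditional distribution of $f(\vect{X})$ given a fixed realization $\xa$ of $\XA$. By definition,
\[
H(f(\vect{X}) \mid \XA) = \sum_{\xa} \PR{\XA = \xa}\, H(f(\vect{X}) \mid \XA = \xa).
\]
Since $f$ is $\{-1,+1\}$-valued, conditioned on $\XA = \xa$ the output $f(\vect{X})$ is a binary random variable, so its entropy equals $h\bigl(\PR{f(\vect{X}) = 1 \mid \XA = \xa}\bigr)$. Thus the whole problem reduces to expressing this single conditional probability in terms of the Fourier coefficients $\PFC(S)$.

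First I would convert the conditional probability into a conditional expectation using the $\{-1,+1\}$ encoding. Because $f(\vect{X})\in\{-1,+1\}$, one has $\EX{f(\vect{X}) \mid \XA = \xa} = 2\,\PR{f(\vect{X}) = 1 \mid \XA = \xa} - 1$, hence
\[
\PR{f(\vect{X}) = 1 \mid \XA = \xa} = \tfrac{1}{2}\bigl(1 + \EX{f(\vect{X}) \mid \XA = \xa}\bigr).
\]
This is exactly why the $\{-1,+1\}$ alphabet is convenient here: the conditional mean carries the same information as the conditional probability, and the mean is what the Fourier expansion computes cleanly.

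The heart of the argument is to evaluate $\EX{f(\vect{X}) \mid \XA = \xa}$ through the expansion $f = \sum_{S\subseteq[n]} \PFC(S)\,\OBF_S$. Fix a set $S$ and split it as $S_1 = S \cap A$ and $S_2 = S \setminus A$. By the decomposition property, $\OBF_S = \OBF_{S_1}\,\OBF_{S_2}$; conditioning on $\XA=\xa$ fixes the factor $\OBF_{S_1}(\xa)$ as a constant, while $\OBF_{S_2}(\vect{X})$ depends only on the coordinates outside $A$, which are independent of $\XA$. Therefore $\EX{\OBF_S(\vect{X}) \mid \XA = \xa} = \OBF_{S_1}(\xa)\,\EX{\OBF_{S_2}(\vect{X})}$. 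Here I would invoke orthonormality together with $\OBF_\emptyset \equiv 1$, which yields $\EX{\OBF_{S_2}(\vect{X})} = 0$ whenever $S_2 \neq \emptyset$ and $1$ when $S_2 = \emptyset$. Consequently every term with $S \not\subseteq A$ vanishes, leaving
\[
\EX{f(\vect{X}) \mid \XA = \xa} = \sum_{S \subseteq A} \PFC(S)\,\OBF_S(\xa).
\]

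Combining the three displays gives $H(f(\vect{X}) \mid \XA = \xa) = h\bigl(\tfrac12(1 + \sum_{S\subseteq A}\PFC(S)\,\OBF_S(\xa))\bigr)$, and substituting this back into the definition of conditional entropy recognizes the weighted sum over $\xa$ as the expectation over $\XA$ (equivalently over $\vect{X}$, as the integrand depends only on $\XA$), which is the claimed identity. I do not anticipate a serious obstacle: the only delicate point is the vanishing of the conditional expectation for $S \not\subseteq A$, which rests on the mean-zero property of the nontrivial basis functions combined with the independence of the coordinates outside $A$ from $\XA$. Once this step is in place, the remainder is bookkeeping.
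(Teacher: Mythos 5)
Your proposal is correct and takes essentially the same route as the paper: the paper isolates your key step, $\EX{f(\vect{X}) \mid \XA = \xa} = \sum_{S \subseteq A} \PFC(S)\, \OBF_S(\xa)$, as a standalone lemma proved just as you do — via the decomposition property, independence of the coordinates outside $A$, and the mean-zero property of the nontrivial basis functions — and then concludes with the identical bookkeeping (binary entropy of the conditional probability, averaged over $\xa$).
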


\begin{proof}
See Appendix \ref{proof:relation_fourier_H}. For the special case of uniformly distributed $\vect{X}$, a proof appears in \cite{forre_methods_1990}, in the context of designing S-boxes.
\end{proof}

Using the definition of mutual information, an immediate corollary of Theorem~\ref{th:relation_fourier_H} is  the following:

\begin{corollary}
Let $f$ be a Boolean function, $\vect{X}$ be product distributed, and  $\XA=\{X_i\colon i\in A\}$. Then,
\begin{equation}
\MI{f(\vect{X})}{\XA}= h\left(1/2(1+ \PFC(\emptyset) \right) \label{eq:michar} - \mathbb E \left[ h\left(\frac{1}{2} \left(1+ \sum_{S \subseteq A} \PFC(S) \OBF_{S}(\XA) \right) \right)  \right].
\end{equation}
\label{co:michar}
\end{corollary}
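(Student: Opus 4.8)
The plan is to start from the definition of mutual information, $\MI{f(\vect{X})}{\XA} = H(f(\vect{X})) - H(f(\vect{X}) \mid \XA)$, and to recognize that the conditional entropy term is already furnished by Theorem~\ref{th:relation_fourier_H}. Indeed, the second term of the claimed identity matches verbatim the expression given there for $H(f(\vect{X}) \mid \XA)$. Thus the only remaining task is to verify that the unconditional entropy of $f(\vect{X})$ equals $h\left(\tfrac{1}{2}(1+\PFC(\emptyset))\right)$.

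To do this, I would observe that $f(\vect{X})$ is a binary random variable taking values in $\{-1,+1\}$, so by the definition of entropy we have $H(f(\vect{X})) = h\left(\PR{f(\vect{X})=1}\right)$, where $h$ is the binary entropy function. The key step is to express $\PR{f(\vect{X})=1}$ through the empty-set Fourier coefficient. Since $\OBF_\emptyset \equiv 1$, the definition $\PFC(\emptyset) = \IPP{f}{\OBF_\emptyset}$ gives $\PFC(\emptyset) = \EX{f(\vect{X})}$. Evaluating this expectation directly, $\EX{f(\vect{X})} = \PR{f(\vect{X})=1} - \PR{f(\vect{X})=-1} = 2\PR{f(\vect{X})=1} - 1$, and solving yields $\PR{f(\vect{X})=1} = \tfrac{1}{2}(1+\PFC(\emptyset))$.

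Substituting this into $H(f(\vect{X})) = h\left(\PR{f(\vect{X})=1}\right)$ produces the first term of the corollary, and combining it with Theorem~\ref{th:relation_fourier_H} for the conditional entropy completes the argument. There is no genuine obstacle here: the substantive content resides entirely in Theorem~\ref{th:relation_fourier_H}, and the corollary reduces to the elementary computation of a binary random variable's entropy from its mean. The only point worth a remark is that the symmetry $h(p)=h(1-p)$ renders the result insensitive to whether one tracks $\PR{f(\vect{X})=1}$ or $\PR{f(\vect{X})=-1}$, so no sign ambiguity arises.
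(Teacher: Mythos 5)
Your proof is correct and takes essentially the same route as the paper, which presents this corollary as an immediate consequence of Theorem~\ref{th:relation_fourier_H} combined with the definition of mutual information. Your explicit verification that $H(f(\vect{X})) = h\left(\tfrac{1}{2}(1+\PFC(\emptyset))\right)$, via $\PFC(\emptyset) = \EX{f(\vect{X})} = 2\PR{f(\vect{X})=1}-1$, simply fills in the elementary step the paper leaves implicit.
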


Theorem~\ref{th:relation_fourier_H} (and Corollary~\ref{co:michar}) shows that the conditional entropy $H(f(\vect{X}) | \XA)$ and the mutual information $\MI{f(\vect{X})}{\XA}$ are functions of the coefficients $\{\PFC(S)\colon S \subseteq A\}$ only. This already hints at a fundamental difference to the average sensitivity, since the average sensitivity depends on the coefficients $\{\PFC(S)\colon |S \cap A|>0\}$, according to \eqref{eq:av_fourier}.

We next discuss $\MI{f(\vect{X})}{X_i}$ based on \eqref{eq:michar}. First, note that $\MI{f(\vect{X})}{X_i}$ has previously been studied under the notion \textit{information gain} as a measure of `goodness' for split variables in greedy tree learners \cite{rosell_why_2005} and also under the notion of \textit{informativeness} to quantify voting power \cite{diskin_voting_2010}. According to \eqref{eq:michar}, the mutual information $\MI{f(\vect{X})}{X_i}$ just depends on $\PFC(\{i\})$, $\PFC(\emptyset)$, and $p_i$. In contrast, the influence $\INF{i}{f}$ is a function of the coefficients $\{ \PFC( S)\colon S\in [n], i\in S \}$, according to \eqref{eq:infinteroffcoef}. In Figure \ref{fig2}, we depict $\MI{f(\vect{X})}{X_i}$ for $p_i=0.3$ as a function of $\PFC(\{i\})$ and $\PFC(\emptyset)$.

\begin{figure}
\begin{center}
\includegraphics[width=0.5\textwidth]{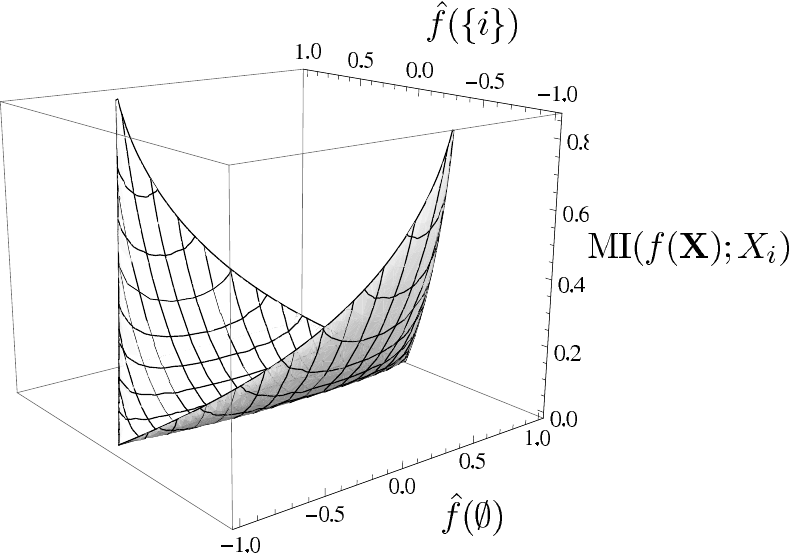}
\end{center}
\caption{$\boldsymbol{\MI{f(\vect{X})}{X_i}}$ as a function of $\boldsymbol{\PFC(\{i\})}$ and $\boldsymbol{\PFC(\emptyset)}$ for $\boldsymbol{p_i=0{.}3}$.} \label{fig2} 
\end{figure}

It can be seen that $\MI{f(\vect{X})}{X_i} = 0$, i.e., $f(\vect{X})$ and $X_i$ are statistically independent if and only if $\PFC(\{i\})=0$. That can be formalized as follows: $\MI{f(\vect{X})}{X_i}$ is convex in $\PFC(\{i\})$. This can be proven by taking the second derivative of \eqref{eq:michar} and observing that it is larger than zero for all pairs of values ($\PFC(\emptyset)$,$\PFC(\{i\})$) for which $\MI{f(\vect{X})}{X_i}$ is defined. Next, from \eqref{eq:michar}, we see that $\MI{f(\vect{X})}{X_i} = 0$ if $\PFC(\{i\})=0$; hence, it follows that $\MI{f(\vect{X})}{X_i} = 0$ if and only if $\PFC(\{i\})=0$, which proves the following result:

\begin{corollary}
Let $f$ be a Boolean function, and $\vect{X}$ be product distributed. $X_i$ and $f(\vect{X})$ are statistically independent if and only if $\PFC(\{i\})=0$.
\label{co:stat_indep_onev}
\end{corollary}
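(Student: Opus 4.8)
The plan is to reduce the statement to a one-variable analysis of the expression furnished by Corollary~\ref{co:michar}. First I would invoke the standard fact that $\MI{f(\vect{X})}{X_i}=0$ holds if and only if $f(\vect{X})$ and $X_i$ are statistically independent (see \cite{cover_elements_2006}), so it suffices to show $\MI{f(\vect{X})}{X_i}=0$ exactly when $\PFC(\{i\})=0$. Specializing Corollary~\ref{co:michar} to the singleton $A=\{i\}$, whose only subsets are $\emptyset$ and $\{i\}$, and using $\OBF_{\{i\}}(X_i)=(X_i-\mu_i)/\sigma_i$, gives
\[
\MI{f(\vect{X})}{X_i}= h\!\left(\tfrac{1}{2}(1+\PFC(\emptyset))\right)-\EX{h\!\left(\tfrac{1}{2}\Big(1+\PFC(\emptyset)+\PFC(\{i\})\tfrac{X_i-\mu_i}{\sigma_i}\Big)\right)}.
\]
I would then treat the right-hand side as a function $g(c)$ of the single real variable $c=\PFC(\{i\})$, with $\PFC(\emptyset)$ and $p_i$ held fixed. (If $f$ is constant the claim is trivial, since then $\PFC(\{i\})=0$ and the independence is immediate, so assume $\PFC(\emptyset)=\EX{f}\in(-1,1)$.)

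The ``if'' direction is immediate: setting $c=0$ collapses the argument of the expectation to the constant $\tfrac{1}{2}(1+\PFC(\emptyset))$, so the expectation equals the leading term and $g(0)=0$. For the ``only if'' direction I would prove that $g$ is \emph{strictly} convex in $c$. Since the leading term $h(\tfrac{1}{2}(1+\PFC(\emptyset)))$ does not depend on $c$, it suffices to differentiate the expectation twice. Writing $z_i=(X_i-\mu_i)/\sigma_i$, the chain rule gives $\frac{d^2}{dc^2}h\!\left(\tfrac{1}{2}(1+\PFC(\emptyset)+c\,z_i)\right)=\tfrac{z_i^2}{4}\,h''(\cdot)$, and because $h''(p)=-\frac{1}{p(1-p)\ln 2}<0$ on $(0,1)$, each term is strictly negative whenever $z_i\neq 0$. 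For a nondegenerate input, $0<p_i<1$, so $\mu_i\in(-1,1)$ and $z_i\neq 0$ for both values of $X_i$; hence the second derivative of the expectation is strictly negative and $g''(c)>0$.

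Finally I would combine strict convexity with the nonnegativity of mutual information. Since $g(c)=\MI{f(\vect{X})}{X_i}\ge 0$ for all admissible $c$ and $g(0)=0$, the interior point $c=0$ is a global minimizer of the strictly convex function $g$ and is therefore its \emph{unique} minimizer; consequently $g(c)>0$ for every $c\neq 0$. This yields $\MI{f(\vect{X})}{X_i}=0$ if and only if $\PFC(\{i\})=0$, which is the assertion. The one point needing care is the boundary: the argument of $h$ equals the conditional probability $\PR{f(\vect{X})=1\mid X_i=x_i}$, and $h''$ blows up when this probability reaches $0$ or $1$. I would handle this by carrying out the convexity computation on the open region where both conditional probabilities lie strictly inside $(0,1)$ — which contains $c=0$ since $\PFC(\emptyset)\in(-1,1)$ — and noting that at any boundary value $f(\vect{X})$ becomes a deterministic function of $X_i$ on that branch, forcing $\PFC(\{i\})\neq 0$, so such values lie strictly away from $c=0$ and are consistent with the claimed equivalence.
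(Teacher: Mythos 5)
Your route is the same one the paper takes for this corollary: the paper's (inline) proof is precisely the observation that $\MI{f(\vect{X})}{X_i}$ is strictly convex in $\PFC(\{i\})$ (by differentiating \eqref{eq:michar} twice) and vanishes at $\PFC(\{i\})=0$, and your specialization of Corollary~\ref{co:michar} to $A=\{i\}$, the computation with $h''(p)=-\tfrac{1}{p(1-p)\ln 2}$, and the unique-minimizer argument supply exactly the details the paper leaves implicit. (The paper additionally remarks that the corollary is the case $A=\{i\}$ of Theorem~\ref{le:statistical_independence_fourier_coeffs}, whose proof in Appendix~\ref{proof_pr6} works pointwise from Lemma~\ref{le:relation_fourier_Evalue_general} and sidesteps all differentiability questions.)

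There is, however, a genuine gap in your boundary handling. The ``only if'' direction requires: $c=\PFC(\{i\})\neq 0$ implies $\MI{f(\vect{X})}{X_i}>0$. Your strict-convexity argument delivers this only for $c$ in the open interval on which both conditional probabilities $q(\pm 1)=\PR{f(\vect{X})=1 \mid X_i=\pm 1}$ lie in $(0,1)$. But the functions whose actual coefficient sits at an endpoint of the admissible interval are not exotic corner cases: they are exactly the functions canalizing in $x_i$ (e.g., AND2 with uniform inputs has $q(-1)=0$), a class the paper singles out as central for regulatory networks. For these, you only observe that $c\neq 0$ and declare the values ``consistent with the claimed equivalence'' --- that is the converse of what must be shown; the needed statement at those points is $\MI{f(\vect{X})}{X_i}>0$. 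The fix is one line: if $q(x_i)\in\{0,1\}$ for some $x_i$ while $f$ is nonconstant, then $q(x_i)$ differs from $\PR{f(\vect{X})=1}=\tfrac12\left(1+\PFC(\emptyset)\right)\in(0,1)$, so the conditional law of $f(\vect{X})$ given $X_i=x_i$ differs from its marginal law; hence $f(\vect{X})$ and $X_i$ are dependent and $\MI{f(\vect{X})}{X_i}>0$ by the standard fact you already invoked. (Alternatively: since $h$ is concave and continuous on all of $[0,1]$, your $g$ is convex and continuous on the \emph{closed} admissible interval and strictly convex inside, with interior minimum $0$ at $c=0$; convexity then forces $g>0$ at the endpoints as well.) With that line added, your proof is complete and coincides with the paper's argument.
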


Corollary~\ref{co:stat_indep_onev} also follows immediately from a more general result, namely Theorem~\ref{le:statistical_independence_fourier_coeffs}, which is presented later. Recall that for PARITY2,  $\MI{f_{\text{PARITY2}}(\vect{X})}{X_1} = 0$ and $\PFC(\{1\})=0$; hence, Corollary~\ref{co:stat_indep_onev} comes at no surprise.

From Figure \ref{fig2}, it can be seen that the larger $|\PFC(\{i\})|$, the larger $\MI{f(\vect{X})}{X_i}$ becomes. Formally, it follows from the convexity of $\MI{f(\vect{X})}{X_i}$ and Corollary~\ref{co:stat_indep_onev} that $\MI{f(\vect{X})}{X_i}$ is increasing in $|\PFC(\{i\})|$. Hence,  $X_i$ has large determinative power, i.e., $\MI{f(\vect{X})}{X_i}$ is large, if and only if $|\PFC(\{i\})|$ is large (i.e., close to one). $|\PFC(\{i\})|$ is trivially maximized for the \textit{dictatorship function}, i.e., for $f(\vect{x})=x_i$, or its negation, i.e., $f(\vect{x})=-x_i$. The output $f(\vect{X})$ of the dictatorship function is fully determined by $x_i$.

Next, let us consider the (trivial) case where $A =[n]$ and hence $\XA=\vect{X}$. Then, $\MI{f(\vect{X})}{\vect{X}} =  h(1/2(1+ \PFC(\emptyset) )$. It follows that $\MI{f(\vect{X})}{\vect{X}}$ is maximized for $\PFC(\emptyset) \allowbreak= 0$, i.e, $\PR{f(\vect{X}) = 1} = 1/2$, i.e., if the variance of $f(\vect{X})$ is $1$. In general, the closer to zero $\PFC(\emptyset)$ is, the larger the mutual information between a function's output and all its inputs becomes. Let us finally relate the conditional entropy $H(f(\vect{X}) | \XA  )$ to the concentration of the Fourier weight on the coefficients $\{S\colon S \subseteq A\}$, $A \subseteq [n]$.

\begin{theorem}
Let $f$ be a Boolean function, let $\vect{X}$ be product distributed, and let $\XA=\{X_i\colon i\in A\}$ be a fixed set of arguments, where $A \subseteq [n]$. Then,
\begin{equation*}
\left(\! 1- \sum_{S\subseteq A} \PFC (S)^2 \! \right)^{\frac{1}{\ln(4)}}  \geq H(f(\vect{X}) | \XA ) \geq 1 - \sum_{S\subseteq A} \PFC (S)^2.
\end{equation*}
\label{th:hlowerbound}
\end{theorem}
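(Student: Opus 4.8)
The plan is to reduce the statement to a pair of pointwise inequalities for the binary entropy function and then take expectations. By Theorem~\ref{th:relation_fourier_H}, writing $p(\XA) \defeq \tfrac{1}{2}\big(1 + \sum_{S\subseteq A}\PFC(S)\OBF_S(\XA)\big)$, we have $H(f(\vect{X})\mid\XA) = \EX{h(p(\XA))}$. The quantity $g(\XA) \defeq \sum_{S\subseteq A}\PFC(S)\OBF_S(\XA)$ is exactly the conditional expectation $\EX{f(\vect{X})\mid\XA}$, so it takes values in $[-1,1]$ (hence $p(\XA)\in[0,1]$ and $h$ is applied to a legitimate argument), and $4p(1-p) = (1+g)(1-g) = 1 - g^2 \ge 0$. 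Moreover, by orthonormality of the $\OBF_S$ we get $\EX{g^2} = \sum_{S\subseteq A}\PFC(S)^2$, so that $\EX{1 - g^2} = 1 - \sum_{S\subseteq A}\PFC(S)^2$. Both desired bounds thus become bounds on $\EX{h(p)}$ in terms of $\EX{1-g^2}$.

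The heart of the matter is the elementary two-sided estimate
\[
4p(1-p) \;\le\; h(p) \;\le\; \big(4p(1-p)\big)^{1/\ln 4}, \qquad p \in [0,1].
\]
The lower inequality is standard: the difference $h(p) - 4p(1-p)$ vanishes at $p = 0,\tfrac12,1$, and a short analysis of its second derivative (equivalently, the concavity of $h$ against the parabola $4p(1-p)$ interpolating these three points) shows it is nonnegative. Granting it, taking expectations of $h(p(\XA)) \ge 1 - g(\XA)^2$ immediately yields $H(f(\vect{X})\mid\XA) \ge \EX{1 - g^2} = 1 - \sum_{S\subseteq A}\PFC(S)^2$, which is the lower bound of the theorem.

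For the upper bound I would first establish the pointwise inequality above, which after raising to the power $\ln 4$ is equivalent to $h(p)^{\ln 4} \le 4p(1-p)$. With it in hand, apply the estimate with $p = p(\XA)$ to obtain $h(p(\XA)) \le (1 - g^2)^{1/\ln 4}$, take expectations, and invoke Jensen's inequality for the concave map $t \mapsto t^{1/\ln 4}$ (concave since $1/\ln 4 < 1$) applied to the nonnegative random variable $1 - g^2$:
\[
H(f(\vect{X})\mid\XA) = \EX{h(p)} \le \EX{(1-g^2)^{1/\ln 4}} \le \big(\EX{1 - g^2}\big)^{1/\ln 4} = \Big(1 - \sum_{S\subseteq A}\PFC(S)^2\Big)^{1/\ln 4},
\]
which is exactly the claimed upper bound.

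The main obstacle is the pointwise upper estimate $h(p)^{\ln 4} \le 4p(1-p)$; the exponent $1/\ln 4$ is precisely the critical one, since matching the second-order Taylor expansions of both sides at $p=\tfrac12$ (where $h''(\tfrac12) = -4/\ln 2$ and $4p(1-p) = 1 - (1-2p)^2$) forces exactly this value and makes the inequality tight to second order there. I would prove it by exploiting the symmetry $p \mapsto 1-p$ to restrict to $p \in (0,\tfrac12]$, checking the boundary behavior (the left-hand side tends to $0$ faster than $4p(1-p)$ as $p\to 0$, while both sides meet at $p=\tfrac12$), and controlling the sign of the derivative of $F(p) \defeq \ln\!\big(4p(1-p)\big) - \ln 4 \cdot \ln h(p)$ on this interval to conclude $F \ge 0$. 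This single-variable calculus is the only genuinely delicate step; everything else is bookkeeping with Parseval, orthonormality, and Jensen.
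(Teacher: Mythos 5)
Your proposal is correct and follows essentially the same route as the paper's proof: both express $H(f(\vect{X})\mid\XA)$ as $\EX{h(q(\XA))}$ via Theorem~\ref{th:relation_fourier_H}, compute $\EX{4q(\XA)(1-q(\XA))} = 1 - \sum_{S\subseteq A}\PFC(S)^2$ by orthonormality, apply the two-sided pointwise bound $4p(1-p) \le h(p) \le \left(4p(1-p)\right)^{1/\ln 4}$, and obtain the upper bound with Jensen's inequality for the concave map $t\mapsto t^{1/\ln 4}$. The only difference is that the paper cites these pointwise entropy bounds from Theorem 1.2 of \cite{topsoe_bounds_2001} rather than proving them, whereas you sketch single-variable-calculus arguments for them; your sketches are plausible (and your identification of $1/\ln 4$ as the critical exponent is exactly right), so this is just extra work, not a gap.
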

\begin{proof}
See Appendix \ref{proof_pr7}.
\end{proof}

Theorem~\ref{th:hlowerbound} shows that $H(f(\vect{X}) | \XA )$ can be approximated with $1 - \sum_{S\subseteq A} \PFC (S)^2$. It further shows that $H(f(\vect{X}) | \XA )$ is small if the Fourier weight is concentrated on the variables in the set $A$, i.e., if $\sum_{S \subseteq A} \PFC (S)^2$ is close to one. In contrast, as mentioned previously, for $\INF{A}{f}$, it is relevant whether the Fourier weight is concentrated on the coefficients with high degree.

\subsection{Relation to measures of perturbation}

Mutual information and average sensitivity are related as follows. 

\begin{theorem}
For any Boolean function $f$, for any product distributed $\vect{X}$,
 \begin{equation}
  \INF{A}{f}  \geq   \underset{i\in A}{\min}\left( \frac{1}{\sigma_i^2}  \right) \left(\MI{f(\vect{X})}{\XA} - \Psi(\VAR{f(\vect{X})}) \right)
 \label{eq:inf_bound_mi}
  \end{equation}
  with
  \begin{equation}
  \Psi(x) \defeq   ( x)^{1/\ln(4)}-x.
  \label{eq:errorpsi}
  \end{equation}
\label{pr:inf_bound_mi}
\end{theorem}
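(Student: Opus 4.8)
The plan is to bound $\INF{A}{f}$ from below by the Fourier weight concentrated on the subsets of $A$, and then to translate that weight into mutual information by means of Theorem~\ref{th:hlowerbound}.

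I would begin from the spectral formula for the average sensitivity in Proposition~\ref{pr:av_fourier}, $\INF{A}{f}=\sum_{S\subseteq[n]}\PFC(S)^2\sum_{i\in S\cap A}\sigma_i^{-2}$. Only sets with $S\cap A\neq\emptyset$ contribute, and for each such $S$ the inner sum has at least one term $\sigma_i^{-2}$ with $i\in A$, hence is at least $\min_{i\in A}\sigma_i^{-2}$. Dropping every ``mixed'' set that meets both $A$ and its complement (which only lowers the right-hand side) and retaining the nonempty subsets of $A$ gives
\[
\INF{A}{f} \geq \min_{i\in A}\frac{1}{\sigma_i^2}\sum_{\emptyset\neq S\subseteq A}\PFC(S)^2 = \min_{i\in A}\frac{1}{\sigma_i^2}\left(\sum_{S\subseteq A}\PFC(S)^2-\PFC(\emptyset)^2\right).
\]

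The next step converts $\sum_{S\subseteq A}\PFC(S)^2$ into mutual information. The lower bound of Theorem~\ref{th:hlowerbound} reads $H(f(\vect{X}) | \XA)\geq 1-\sum_{S\subseteq A}\PFC(S)^2$, i.e.\ $\sum_{S\subseteq A}\PFC(S)^2\geq 1-H(f(\vect{X}) | \XA)$. Substituting $H(f(\vect{X}) | \XA)=H(f(\vect{X}))-\MI{f(\vect{X})}{\XA}$, and using $\PFC(\emptyset)=\EX{f(\vect{X})}$ together with $\VAR{f(\vect{X})}=1-\PFC(\emptyset)^2$ (Parseval, as $f^2\equiv 1$), the parenthesis above is at least $\MI{f(\vect{X})}{\XA}+\VAR{f(\vect{X})}-H(f(\vect{X}))$. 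It then remains to absorb the gap $\VAR{f(\vect{X})}-H(f(\vect{X}))$ into the error term. Since $\Psi(x)+x=x^{1/\ln(4)}$, the desired inequality $\MI{f(\vect{X})}{\XA}+\VAR{f(\vect{X})}-H(f(\vect{X}))\geq\MI{f(\vect{X})}{\XA}-\Psi(\VAR{f(\vect{X})})$ is equivalent to $H(f(\vect{X}))\leq\VAR{f(\vect{X})}^{1/\ln(4)}$. This is precisely the upper bound of Theorem~\ref{th:hlowerbound} in the degenerate case $A=\emptyset$, where conditioning is on the empty set so that $H(f(\vect{X}) | \mathbf{X}_\emptyset)=H(f(\vect{X}))$ and $1-\sum_{S\subseteq\emptyset}\PFC(S)^2=1-\PFC(\emptyset)^2=\VAR{f(\vect{X})}$; equivalently it is the pointwise entropy--variance inequality $h(p)\leq(4p(1-p))^{1/\ln(4)}$ at $p=\PR{f(\vect{X})=1}$. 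Multiplying by the positive factor $\min_{i\in A}\sigma_i^{-2}$ then yields the claim.

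I expect the only delicate point to be the Fourier bookkeeping in the first step: the average sensitivity gathers weight over all $S$ that meet $A$, whereas Theorem~\ref{th:hlowerbound} controls the weight on $S\subseteq A$. Restricting to subsets of $A$ and separating the $S=\emptyset$ term is exactly what forces the variance-versus-entropy correction and hence produces the term $\Psi(\VAR{f(\vect{X})})$.
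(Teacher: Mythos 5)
Your proof is correct and follows essentially the same route as the paper's: lower-bound $\INF{A}{f}$ by $\min_{i\in A}\sigma_i^{-2}\sum_{\emptyset\neq S\subseteq A}\PFC(S)^2$ via Proposition~\ref{pr:av_fourier}, then convert that Fourier weight into $\MI{f(\vect{X})}{\XA}$ using the lower bound of Theorem~\ref{th:hlowerbound} and the entropy--variance inequality $H(f(\vect{X}))\leq\VAR{f(\vect{X})}^{1/\ln(4)}$. The only cosmetic difference is that the paper invokes this last inequality directly from Topsoe's bound, while you obtain it as the $A=\emptyset$ case of Theorem~\ref{th:hlowerbound}; these are the same fact.
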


\begin{proof}
See Appendix \ref{proof_pr5}.
\end{proof}

Note that the term $\Psi(\VAR{f(\vect{X})})$ is close to zero.  Specifically, for any $f(\vect{X})$ we have $0\leq \Psi(\VAR{f(\vect{X})}) < 0.12$, and for settings of interest, $\Psi(\VAR{f(\vect{X})})$ is very close to zero, as explained in more detail in the following. Theorem~\ref{pr:inf_bound_mi} shows that if $\MI{f(\vect{X})}{\XA}$ if large (i.e., close to one), $f$ must be sensitive to perturbations of the entries of $\XA$. Moreover, if $\INF{A}{f}$ is small (i.e., if $f$ is tolerant to perturbations of the entries of $\XA$), then $\MI{f(\vect{X})}{\XA}$ must be small (i.e., the entries of $\XA$ do not have determinative power). For the case that $A=[n]$, Theorem~\ref{pr:inf_bound_mi} states that the average sensitivity $\AV{f}$ is lower-bounded by $\MI{f(\vect{X})}{\vect{X}}$ minus some small term.

We next discuss the special case that $A = \{i\}$. Theorem~\ref{pr:inf_bound_mi} evaluated for $A = \{i\}$ yields a lower bound on the influence of a variable in terms of the mutual information of that variable, namely
\begin{equation}
  \INF{i}{f} \geq \frac{1}{\sigma_i^2}  \left(\MI{f(\vect{X})}{X_i} -  \Psi(\VAR{f(\vect{X})}) \right).
  \label{eq:imionevar}
 \end{equation}
Again, $\Psi(\VAR{f(\vect{X})})$ is close to zero for settings of interest, as the following argument explains. Equation ~\eqref{eq:imionevar} will not be evaluated for small $\VAR{f(\vect{X})}$; since then, $f(\vect{X})$ is close to a constant function (i.e., close to $f(\vect{X}) = 1$ or $f(\vect{X}) = -1$), and $ \INF{i}{f}$ and $\MI{f(\vect{X})}{X_i}$ must both be small (i.e., close to zero) anyway. Hence, \eqref{eq:imionevar} is of interest when $\VAR{f(\vect{X})}$ is large, i.e., close to 1; for this case, the term $\Psi(\VAR{f(\vect{X})}) $ is small (e.g., for $\VAR{f(\vect{X})} >0.8$, $\Psi(\VAR{f(\vect{X}})  < 0.05$). Observe that, according to \eqref{eq:imionevar}, if $\MI{f(\vect{X})}{X_i}$ is large, then $\INF{i}{f}$ is also large. That proves the intuitive idea that if an input determines $f(\vect{X})$ to some extent, this input must be sensitive to perturbations. Conversely, as mentioned previously, an input $i$ can have large influence and still $\MI{f(\vect{X})}{X_i}= 0$. E.g., for the PARITY2 function, we have $\INF{i}{f} = 1$ and $\MI{f(\vect{X})}{X_i}= 0$.

Interestingly, the influence also has an information theoretic interpretation. The following theorem generalizes Theorem 1 in \cite{diskin_voting_2010}.

\begin{theorem} For any Boolean function $f$, for any product distributed $\vect{X}$,
 \[
  \INF{i}{f} = \frac{H \!\left(f(\vect{X})| \vect{X}_{[n]\setminus\{i\}}\right)  }{H(X_i) }.
 \]
 \label{pr:inf_infortheor}
\end{theorem}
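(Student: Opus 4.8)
The plan is to evaluate the conditional entropy $H(f(\vect{X})\mid \vect{X}_{[n]\setminus\{i\}})$ directly from its definition, by conditioning on a fixed configuration of all variables other than $X_i$ and observing that, for each such configuration, the output is either deterministic or a relabeling of $X_i$. First I would fix a realization $\vect{x}_{[n]\setminus\{i\}}$ of the remaining variables and note that, given this conditioning, $f(\vect{X})$ is a function of the single bit $X_i$ only. There are exactly two cases. If $f(\vect{x}) = f(\vect{x}\oplus e_i)$, then $f(\vect{X})$ is constant given the conditioning, so its conditional entropy is zero. Otherwise $f(\vect{X})$ takes one value with probability $p_i$ and the other with probability $1-p_i$, so its conditional entropy equals $h(p_i) = H(X_i)$.

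The key observation is that the indicator of the event $\{f(\vect{x}) \neq f(\vect{x}\oplus e_i)\}$ depends only on $\vect{x}_{[n]\setminus\{i\}}$, since flipping $X_i$ produces the same comparison regardless of the value of $X_i$. Plugging the case analysis into
\[
H(f(\vect{X})\mid \vect{X}_{[n]\setminus\{i\}}) = \sum_{\vect{x}_{[n]\setminus\{i\}}} \PR{\vect{X}_{[n]\setminus\{i\}} = \vect{x}_{[n]\setminus\{i\}}}\, H\!\left(f(\vect{X}) \mid \vect{X}_{[n]\setminus\{i\}} = \vect{x}_{[n]\setminus\{i\}}\right),
\]
only the terms in which the indicator equals one survive, and each such term contributes exactly $H(X_i)$. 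This yields
\[
H(f(\vect{X})\mid \vect{X}_{[n]\setminus\{i\}}) = H(X_i) \sum_{\vect{x}_{[n]\setminus\{i\}}} \PR{\vect{X}_{[n]\setminus\{i\}} = \vect{x}_{[n]\setminus\{i\}}}\, \ind{f(\vect{x})\neq f(\vect{x}\oplus e_i)}.
\]

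Finally I would identify the remaining sum with $\INF{i}{f}$. Because the indicator is $\vect{X}_{[n]\setminus\{i\}}$-measurable, the probability $\PR{f(\vect{X})\neq f(\vect{X}\oplus e_i)}$ that defines the influence equals precisely this weighted count of configurations for which $X_i$ is pivotal. Dividing through by $H(X_i)$ then gives the stated identity.

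I do not anticipate a serious obstacle. The one point requiring care is the measurability observation that lets the influence reappear as the weighted count of pivotal configurations, together with the degenerate case $p_i\in\{0,1\}$, where $H(X_i)=0$ and the quotient is interpreted in the limit since both sides vanish.
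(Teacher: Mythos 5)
Your proposal is correct and takes essentially the same route as the paper's own proof: both evaluate $H\!\left(f(\vect{X})\mid \vect{X}_{[n]\setminus\{i\}}\right)$ by conditioning on a fixed $\vect{x}_{[n]\setminus\{i\}}$, splitting into the pivotal case (conditional entropy $h(p_i)$) and the non-pivotal case (conditional entropy $0$), and then identifying the resulting weighted sum of pivotal configurations with $\INF{i}{f}$. Your explicit attention to the fact that the pivotality indicator depends only on $\vect{x}_{[n]\setminus\{i\}}$, and to the degenerate case $p_i\in\{0,1\}$, makes precise two points the paper leaves implicit.
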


\begin{proof}
See Appendix \ref{proof_pr4}. For uniformly distributed $\vect{X}$, a proof appears in \cite{diskin_voting_2010}.
\end{proof}

Theorem~\ref{pr:inf_infortheor} shows that the influence of a variable is a measure for the uncertainty of the function's output that remains if all variables except variable $i$ are set.

\subsection{Statistical independence of inputs to a Boolean function}

Next, we characterize statistical independence of $f(\vect{X})$ and a set of its arguments $\XA$ in terms of Fourier coefficients. This result generalizes a theorem derived by Xiao and Massey \cite{xiao_spectral_1988} from uniform to product distributed $\vect{X}$.

\begin{theorem}
Let $A \subseteq [n]$ be fixed, $f$ be a Boolean function, and $\vect{X}$ be product distributed. Then, $f(\vect{X})$ and the inputs $\XA = \{X_i : i \in A \}$ are statistically independent if and only if
\[
\PFC(S) = 0   \text{ for all }  S\subseteq A \setminus \emptyset.
\]
\label{le:statistical_independence_fourier_coeffs}
\end{theorem}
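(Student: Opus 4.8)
The plan is to reduce statistical independence to the constancy of a single conditional expectation and then evaluate that conditional expectation through the Fourier expansion. Since $f(\vect{X})$ takes only the two values $\pm 1$, its conditional law given $\XA = \vect{x}_A$ is completely determined by the number $\PR{f(\vect{X}) = 1 \mid \XA = \vect{x}_A} = \tfrac{1}{2}\left(1 + \EX{f(\vect{X}) \mid \XA = \vect{x}_A}\right)$. Hence $f(\vect{X})$ and $\XA$ are independent precisely when $\EX{f(\vect{X}) \mid \XA = \vect{x}_A}$ does not depend on $\vect{x}_A$ over the support of $\XA$, which---because $0 < p_i < 1$ for every $i$, so that $\sigma_i > 0$---is all of $\{-1,+1\}^{|A|}$.

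First I would compute this conditional expectation from the expansion \eqref{eq:fourier_expansion}. Writing each $S \subseteq [n]$ as the disjoint union of $S \cap A$ and $S \setminus A$ and applying the Decomposition property gives $\OBF_S = \OBF_{S \cap A}(\XA)\,\OBF_{S \setminus A}(\vect{X}_{[n] \setminus A})$. Conditioning on $\XA = \vect{x}_A$ fixes the first factor, while independence of the coordinates together with $\EX{\OBF_B(\vect{X})} = 0$ for every nonempty $B$ (the $B \neq \emptyset$ case of Orthonormality, using $\OBF_\emptyset \equiv 1$) annihilates every term with $S \setminus A \neq \emptyset$. What remains is
\[
\EX{f(\vect{X}) \mid \XA = \vect{x}_A} = \sum_{S \subseteq A} \PFC(S)\,\OBF_S(\vect{x}_A),
\]
the same polynomial that appears inside the binary entropy in Theorem~\ref{th:relation_fourier_H}; its $S = \emptyset$ term is the constant $\PFC(\emptyset) = \EX{f(\vect{X})}$.

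It then remains to show that this sum is constant in $\vect{x}_A$ if and only if $\PFC(S) = 0$ for every nonempty $S \subseteq A$. The \emph{if} direction is immediate, since vanishing of all nonempty coefficients collapses the sum to $\PFC(\emptyset)$. For \emph{only if}, constancy is equivalent to $\sum_{\emptyset \neq S \subseteq A} \PFC(S)\,\OBF_S(\vect{x}_A) = 0$ for all $\vect{x}_A$, and here I would invoke linear independence of $\{\OBF_S : S \subseteq A\}$ as functions on $\{-1,+1\}^{|A|}$; this follows from the Orthonormality property restricted to the variables in $A$, since orthonormal functions are linearly independent, and it forces each $\PFC(S)$ with $\emptyset \neq S \subseteq A$ to vanish. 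Combining the two directions proves the theorem, and specializing to $A = \{i\}$ recovers Corollary~\ref{co:stat_indep_onev}.

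The main obstacle---really the only step needing care---is the opening reduction from genuine independence of the pair $(f(\vect{X}), \XA)$ to mere constancy of a conditional expectation; this equivalence is special to the binary output and would not hold for a general-valued $f$. I would justify it explicitly by noting that a two-valued random variable's conditional distribution is an affine function of its conditional mean, so equal conditional means force equal conditional distributions. Everything downstream is a routine consequence of the orthonormal Fourier apparatus already established in the preliminaries.
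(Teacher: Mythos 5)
Your proposal is correct and takes essentially the same route as the paper: your inline computation of $\EX{f(\vect{X}) \mid \XA = \vect{x}_A} = \sum_{S \subseteq A} \PFC(S)\,\OBF_S(\vect{x}_A)$ is exactly the paper's Lemma~\ref{le:relation_fourier_Evalue_general} (proved the same way, via the decomposition property and the vanishing mean of nonempty basis functions), and both arguments then conclude from linear independence (uniqueness of the Fourier expansion) that $\PFC(S)$ must vanish for all nonempty $S \subseteq A$. The only cosmetic difference is that you phrase independence as constancy of the conditional mean and identify the constant afterwards, whereas the paper equates the conditional with the unconditional probability directly, which pins down the constant as $\PFC(\emptyset)$ from the start.
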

\begin{proof}
See  Appendix \ref{proof_pr6}. For uniformly distributed $\vect{X}$, i.e., $\PR{X_i=1} = 1/2$ for all $i \in [n]$, Theorem~\ref{le:statistical_independence_fourier_coeffs} has been derived by Xiao and Massey \cite{xiao_spectral_1988}. Note that the proof provided here is also conceptually different from the proof for the uniform case in \cite{xiao_spectral_1988}, as it does not rely on the Xiao-Massey lemma.   
\end{proof}

Theorem~\ref{le:statistical_independence_fourier_coeffs} shows that a function and small sets of its inputs are statistically independent if the spectrum is concentrated on the coefficients of high degree $d=|S|$. The most prominent example is the parity function of $n$ variables, i.e., $f_{\mathrm{PARITYN}}(\vect{x}) = x_1x_2...x_n$: For uniformly distributed $\vect{X}$, each subset of $n-1$ or fewer arguments and $f_{\mathrm{PARITYN}}(\vect{X})$ are statistically independent. Conversely, if a function is concentrated on the coefficients of low degree $d=|S|$, which is the case for functions that are tolerant to perturbations, then small sets of inputs and the function's output are statistically dependent.

Theorem~\ref{le:statistical_independence_fourier_coeffs} also has an important implication for algorithms that detect functional dependencies in a BN based on estimating the mutual information from observations of the network's states, such as the algorithm presented in \cite{liang_reveal_1998}. Theorem~\ref{le:statistical_independence_fourier_coeffs} characterizes the classes of functions for which such an algorithm may succeed and for which it will fail. Moreover, Theorem~\ref{le:statistical_independence_fourier_coeffs} shows that in a Boolean model of a genetic regulatory network, a functional dependency between a gene and a regulator cannot be detected based on statistical dependence of a regulator $X_i$ and a gene's state $f_j(\vect{X})$, unless the regulatory functions are restricted to those for which $|\PFC(\{i\})| > 0$ holds for each relevant input $i$.

\section{Unate functions}\label{sec:unate}

In this section, we discuss unate, i.e., locally monotone functions.

\begin{definition}
 A Boolean function $f$ is said to be unate in $x_i$ if for each $\vect{x}= (x_1,...,x_n) \in \{-1,+1\}^n$ and for some fixed $a_i\in \{-1,+1\}$,
 $f(x_1,..., x_i=-a_i, ...,x_n) \leq f(x_1,..., x_i=a_i, ...,x_n)$ holds. $f$ is said to be unate if $f$ is unate in each variable $x_i$, $i\in [n]$.
 \label{def:unate}
\end{definition}

Each linear threshold function and nested canalizing function is unate. Moreover, most, if not all, regulatory interactions in a biological network are considered to be unate. That can be deduced from \cite{grefenstette_analysis_2006,raeymaekers_dynamics_2002}, and the basic argument is the following: If an element acts either as a repressor or an activator for some gene, but never as both (which is a reasonable assumption for regulatory interactions\cite{grefenstette_analysis_2006,raeymaekers_dynamics_2002}), then the function determining the gene's state is unate by definition. For unate functions, the following property holds:

\begin{proposition}
Let $f: \{-1,+1\}^n \rightarrow \{-1,+1\}$ be unate. Then,
\begin{equation}
\PFC (\{i\}) =  a_i\sigma_i \INF{i}{f}, \; \forall i \in [n],
\label{eq:condition_unate_statdep}
\end{equation}
where $a_i \in \{-1,+1\}$ is the parameter in Definition~\ref{def:unate}.
\label{pr:rel_inf_fou}
\end{proposition}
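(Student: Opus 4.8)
The plan is to route everything through the discrete derivative of $f$ in direction $i$ and then use unateness to pin down its sign. Define
\[
D_i f(\vect{x}) \defeq \tfrac{1}{2}\left( f(\vect{x}^{i\to +1}) - f(\vect{x}^{i\to -1}) \right),
\]
where $\vect{x}^{i\to b}$ denotes $\vect{x}$ with its $i$th entry fixed to $b$. Two elementary observations drive the proof. First, $D_i f$ does not depend on $x_i$ and, since $f$ is $\{-1,+1\}$-valued, takes values in $\{-1,0,+1\}$. Second, $|D_i f(\vect{x})| = \ind{f(\vect{x}) \neq f(\vect{x}\oplus e_i)}$ is precisely the indicator that flipping input $i$ changes the output (an event determined by $\vect{x}_{[n]\setminus\{i\}}$), so that $\EX{|D_i f(\vect{X})|} = \PR{f(\vect{X}) \neq f(\vect{X}\oplus e_i)} = \INF{i}{f}$.

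The first main step is the distribution-free identity $\PFC(\{i\}) = \sigma_i\,\EX{D_i f(\vect{X})}$, valid for every Boolean $f$ and every nondegenerate product distribution (so $\sigma_i>0$). I would obtain this by conditioning the defining expectation $\PFC(\{i\}) = \EX{f(\vect{X})\,(X_i-\mu_i)/\sigma_i}$ on the remaining coordinates $\vect{X}_{[n]\setminus\{i\}}$. Writing $f_{+1},f_{-1}$ for $f(\vect{X}^{i\to +1}),f(\vect{X}^{i\to -1})$, the inner expectation over $X_i$ equals $p_i f_{+1}(1-\mu_i) + (1-p_i)f_{-1}(-1-\mu_i)$, which, using $\mu_i = 2p_i-1$, $\sigma_i^2 = 4p_i(1-p_i)$, and $f_{+1}-f_{-1} = 2 D_i f$, simplifies to $2p_i(1-p_i)(f_{+1}-f_{-1}) = \sigma_i^2\, D_i f(\vect{X})$. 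Dividing by $\sigma_i$ and taking the outer expectation yields the identity. (Alternatively, the same identity falls out of the Fourier expansion, since $D_i f = \sigma_i^{-1}\sum_{S\ni i}\PFC(S)\,\OBF_{S\setminus\{i\}}$ and orthonormality leaves only the $S=\{i\}$ term after taking expectations.)

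The second step invokes unateness. By Definition~\ref{def:unate}, $f$ unate in $x_i$ with parameter $a_i$ means $f(\vect{x}^{i\to -a_i}) \le f(\vect{x}^{i\to a_i})$ for all $\vect{x}$, which is equivalent to $a_i\, D_i f(\vect{x}) \ge 0$ for every $\vect{x}$ (both signs of $a_i$ check out directly). Hence $D_i f$ has constant sign $a_i$ wherever it is nonzero, i.e. $D_i f = a_i\,|D_i f|$ pointwise. Taking expectations and using the indicator identity from the first paragraph,
\[
\EX{D_i f(\vect{X})} = a_i\,\EX{|D_i f(\vect{X})|} = a_i\,\INF{i}{f}.
\]
Combining this with $\PFC(\{i\}) = \sigma_i\,\EX{D_i f(\vect{X})}$ gives $\PFC(\{i\}) = a_i\,\sigma_i\,\INF{i}{f}$, as claimed.

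I expect no deep obstacle: the only places demanding care are the sign bookkeeping in the conditional-expectation computation (correctly tracking $\mu_i$ and $\sigma_i$ under the biased measure) and verifying that Definition~\ref{def:unate} translates into the fixed-sign condition $a_i\,D_i f \ge 0$ — in particular that $a_i=+1$ corresponds to $f$ being nondecreasing in $x_i$ and $a_i=-1$ to nonincreasing. Once the identity $\PFC(\{i\}) = \sigma_i\,\EX{D_i f(\vect{X})}$ is established, the unate step and the final substitution are immediate.
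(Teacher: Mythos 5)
Your proof is correct, and it is essentially the argument the paper itself intends: the paper's proof is a one-line pointer to the monotone-function case (Lemma 4.5 of \cite{bshouty_fourier_1996}), whose standard proof is exactly your discrete-derivative computation $\PFC(\{i\}) = \sigma_i\,\EX{D_i f(\vect{X})}$, with unateness supplying the fixed-sign condition $a_i\, D_i f \ge 0$ in place of $D_i f \ge 0$. You have simply written out in full, with the correct bookkeeping for the biased product measure, the details that the paper delegates to the citation.
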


\begin{proof}
Goes along the same lines as the proof for monotone functions in Lemma 4.5 of \cite{bshouty_fourier_1996}.
\end{proof}

Note that conversely, if \eqref{eq:condition_unate_statdep} holds for each $x_i, i \in [n]$, $f$ is not necessarily unate. Inserting \eqref{eq:condition_unate_statdep} into \eqref{eq:michar} yields
\begin{equation}
 \label{eq:miinfunaterelation}
  \MI{f(\vect{X})}{X_i} = h\left( \frac{1}{2}(1+\PFC(\emptyset)) \right) - \mathbb E \left[ h\left( \frac{1}{2}\left( 1+\PFC(\emptyset) + a_i\sigma_i \INF{i}{f} \frac{X_i-\mu_i}{\sigma_i}  \right) \right)   \right],
 \end{equation}
where the expectation in \eqref{eq:miinfunaterelation} is over $X_i$. Based on \eqref{eq:miinfunaterelation}, the discussion from section~\ref{sec:reltov} on $\MI{f}{X_i}$ applies by using $\PFC (\{i\})$ and $a_i\sigma_i\INF{i}{f}$ synonymously. Hence, for unate functions, the mutual information $\MI{f}{X_i}$ is increasing in the influence $|\INF{i}{f}|$. Moreover, if $f$ is unate, and $x_i$ is a relevant variable, i.e., a variable on which the functions actually depend on, then $|\PFC (\{i\})| > 0$. From this fact and the same arguments as given in section~\ref{sec:reltov} follows:

\begin{theorem}
Let $f\colon \{-1,+1\}^n \rightarrow \{-1,+1\}$ be unate. If and only if $x_i$ is a relevant variable, then $\MI{f(\vect{X})}{X_i} \neq 0$.
\label{co:stat_dep_unate}
\end{theorem}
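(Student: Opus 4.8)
The plan is to prove the biconditional by combining Proposition~\ref{pr:rel_inf_fou} with Corollary~\ref{co:stat_indep_onev}. Recall that Corollary~\ref{co:stat_indep_onev} states that $X_i$ and $f(\vect{X})$ are statistically independent (equivalently $\MI{f(\vect{X})}{X_i}=0$) if and only if $\PFC(\{i\})=0$. Thus it suffices to show that, for a unate function $f$, we have $\PFC(\{i\})\neq 0$ if and only if $x_i$ is a relevant variable.

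First I would handle the forward direction (relevant $\Rightarrow$ $\MI{f(\vect{X})}{X_i}\neq 0$), which is the substantive one. By Proposition~\ref{pr:rel_inf_fou}, for a unate function $\PFC(\{i\}) = a_i\sigma_i\,\INF{i}{f}$ with $a_i\in\{-1,+1\}$ and $\sigma_i>0$ (since $p_i\in(0,1)$ for a genuine distribution). Hence $|\PFC(\{i\})|>0$ precisely when $\INF{i}{f}>0$. Now if $x_i$ is relevant, then by the definition of relevance there exists at least one input $\vect{x}$ for which flipping the $i$th coordinate changes the output of $f$; this forces $I_i(f)=\PR{f(\vect{X})\neq f(\vect{X}\oplus e_i)}>0$, because that flipping event has strictly positive probability under the product distribution. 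Therefore $\INF{i}{f}>0$, so $\PFC(\{i\})\neq 0$, and Corollary~\ref{co:stat_indep_onev} gives $\MI{f(\vect{X})}{X_i}\neq 0$.

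For the reverse direction (not relevant $\Rightarrow$ $\MI{f(\vect{X})}{X_i}=0$), I would argue by contraposition in the trivial direction: if $x_i$ is not relevant, then flipping the $i$th coordinate never changes the function's value, so $I_i(f)=0$, whence $\PFC(\{i\})=a_i\sigma_i\,I_i(f)=0$ by Proposition~\ref{pr:rel_inf_fou}, and Corollary~\ref{co:stat_indep_onev} yields $\MI{f(\vect{X})}{X_i}=0$. Note this direction actually holds for all Boolean functions, not just unate ones; unateness is only needed for the forward direction to convert positive influence into a nonzero degree-one Fourier coefficient (in general, a variable can be relevant yet have $\PFC(\{i\})=0$, as PARITY2 shows).

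The main obstacle, and the only place where unateness is essential, is establishing the implication $\INF{i}{f}>0 \Rightarrow \PFC(\{i\})\neq 0$. For general functions there can be cancellation: the local sensitivities of $f$ to $x_i$ may point in opposite directions at different inputs, summing to a vanishing degree-one coefficient even though the influence is positive. Unateness removes exactly this cancellation, since $f$ is monotone in $x_i$ in a fixed direction $a_i$, so all the contributions to $\PFC(\{i\})$ share a common sign; this is precisely the content encoded in $\PFC(\{i\})=a_i\sigma_i\,\INF{i}{f}$. Once Proposition~\ref{pr:rel_inf_fou} is invoked, this obstacle dissolves and the remainder is a direct application of Corollary~\ref{co:stat_indep_onev}.
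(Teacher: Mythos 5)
Your proof is correct and follows essentially the same route as the paper: the paper likewise combines Proposition~\ref{pr:rel_inf_fou} (so that $\PFC(\{i\}) = a_i\sigma_i\,\INF{i}{f}$, making relevance equivalent to $|\PFC(\{i\})|>0$ for unate $f$) with the characterization of Corollary~\ref{co:stat_indep_onev}. Your write-up is merely more explicit about the step ``relevant $\Rightarrow \INF{i}{f}>0$'' (which uses that the product distribution puts positive mass on every point, already implicit in the paper's requirement $\sigma_i>0$) and about the converse direction, both of which the paper leaves to the reader.
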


In a Boolean model of a biological regulatory network, this implies that if the functions in the network are unate, then a regulator and the target gene must be statistically dependent.

\section{\textit{E. coli} regulatory network}\label{sec:ecoli}

In \cite{covert_integrating_2004}, the authors presented a complex computational model of the \textit{E. coli} transcriptional regulatory network that controls central parts of the \textit{E.~coli} metabolism. The network consists of 798 nodes and 1160 edges. Of the nodes, 636 represent genes and of the remaining 162 nodes, most (103) are external metabolites. The rest are stimuli, and others are state variables such as internal metabolites. The network has a layered feed-forward structure, i.e., no feedback loops exist. The elements in the first layer can be viewed as the inputs of the system, and the elements in the following seven layers are interacting genes that represent the internal state of the system. Our experiments revealed that all functions are unate; therefore, the properties derived in section~\ref{sec:unate} apply. Note that all functions being unate is a special property of the network, since if functions are chosen uniformly at random, it is unlikely to sample a unate function, in particular if the number of inputs $n$ is large.

\subsection{Determinative nodes in the \textit{E. coli} network}
\label{sec:determinative_nodes}

We first identify the input nodes that have large determinative power (we will define what that means in a network setting shortly) and then show that a small number thereof reduces the uncertainty of the network's state significantly. Specifically, we show that on average, the entropy of the node's states conditioned on a small set of determinative input nodes, is small.

To put this result into perspective, we perform the same experiment for random networks with the same and different topology as the \textit{E. coli} network. We denote by $\vect{X} = \{X_1, ..., X_{n}\},  n = 145$ the set of inputs of the feed forward network and assume that the $X_i$ are independent and uniformly distributed. The remaining variables are denoted by $\vect{Y} = \{Y_1, ..., Y_{m} \}, m = 653$ and are a function of the inputs and the network's states, i.e.,  $Y_i=f'_i(\vect{X}, \vect{Y})$. For our analysis, the distributions of the random variables $Y_1, ..., Y_{m}$ need to be computed, since some of those variables are arguments to other functions. This can be circumvented by defining a collapsed network, i.e., a network where each state of a node is given as a function of the input nodes only, i.e., $Y_i = f_i(\vect{X})$. The collapsed network is obtained by consecutively inserting functions into each other, until each function only depends on states of nodes in the input layer, i.e., on $\vect{X}$. The collapsed network reveals the dependencies of each node on the input variables. Interestingly, in the collapsed network, it is seen that the variables chol\_xt$>$0, salicylate, 2ddglcn\_xt$>$0, mnnh$>$0, altrh$>$0, and his-l\_xt$>$0 (here, and in the following, we adopt the names from the original dataset), which appear to be inputs when considering the original \textit{E. coli} network, turn out to be not. Consider, for example, the node salicylate. The only node dependent on salicylate is mara = (( NOT arca OR NOT fnr) OR oxyr OR salicylate). However, arca = (fnr AND NOT oxyr), and it is easily seen that mara simplifies to mara = 1.

Next, we identify the determinative nodes. As argued in section~\ref{sec:mi}, $\MI{f_i(\vect{X})}{X_j}$ is a measure of the determinative power of $X_j$ over $Y_i = f_i(\vect{X})$. This motivates the definition of the determinative power of input $X_j$ over the states in the network as 
\[
D(j) \defeq \sum_{i=1}^m \MI{f_i(\vect{X})}{X_j}.
\]
Note that a small value of $D(j)$ implies that $X_j$ alone does not have large determinative power over the network's states, but $X_j$ may have large determinative power over the network states in conjunction with other variables. In principle $\sum_{i =1}^m \MI{f_i(\vect{X})}{X_{j}, X_{k}}$ can be large for some $j,k\in [n]$, even though $D(j)$ and $D(k)$ are equal to zero. This is, however, not possible in the \textit{E. coli} network since the functions are unate. Specifically, $\MI{f_i(\vect{X})}{X_{j}, X_{k}} \neq 0$ implies that $x_{j}$ or $x_{k}$ are relevant variables, and according to Theorem~\ref{co:stat_dep_unate}, $\MI{f_i(\vect{X})}{X_{j}} \neq 0$ or $\MI{f_i(\vect{X})}{X_{k}} \neq 0$. We computed $D(j)$ for each input variable and found that $D(j)$ is large just for some inputs, such as  o2\_xt (37 bit), leu-l\_xt (20.9 bit), glc-d\_xt (19.3 bit), and glcn\_xt$>$0 (17 bit), but is small for most other variables. Partly, this can be explained by the out-degree (i.e., the number of outgoing edges of a node) distribution of the input nodes. However, having a large out-degree does not necessarily result in large values of $D(j)$. In fact, in the \textit{E. coli} network, glc-d\_xt, glcn\_xt$>$0, and o2\_xt have 99, 93, and 73 outgoing edges, respectively. On the other hand, D(glc-d\_xt) = 19.3 bit and D(glcn\_xt$>$0) = 17 bit, whereas D(o2\_xt) = 37 bit.

Denote $\tau$ as a permutation on $[n]$, such that $D(X_{\tau(1)}) \geq D(X_{\tau(2)}) \geq ... \allowbreak \geq D(X_{\tau(n)})$, i.e., $\tau$ orders the input nodes in descending order in their determinative power.
We next consider $H(\vect{Y} | X_{\tau(1)},..., X_{\tau(l)}  )$ as a function of $l$ to see whether knowledge of a small set of input nodes reduces the entropy of the overall network state significantly. $H(\vect{Y} | X_{\tau(1)},..., X_{\tau(l)}  )$ has an interesting interpretation which arises as a consequence of the so called asymptotic equipartition property \cite{cover_elements_2006} (as discussed in greater detail in \cite{schober_analysis_2011}):
Consider a sequence $\vect{y}_1,..., \vect{y}_k$ of $k$ samples of the random variable $\vect{Y}$.
For $\epsilon>0$ and $k$ sufficiently large, there exists
a set $A^{(k)}_\epsilon$ of typical sequences $\vect{y}_1,..., \vect{y}_k$, such that
\begin{equation*}
    | A^{(k)}_\epsilon | \leq   2^{k (H(\vect{Y}) +\epsilon) }
\end{equation*}
and
\begin{equation*}
    \PR{ \vect{Y} \in A^{(k)}_\epsilon } > 1-\epsilon,
\end{equation*}
where $| A^{(k)}_\epsilon|$ denotes the cardinality of the set $A^{(k)}_\epsilon$. This shows that the sequences obtained as samples of $\vect{Y}$ are likely to fall in a set of size determined by the uncertainty of $\vect{Y}$. Since the output layer consists of 653 nodes, the network's state space has maximal size $2^{653}$. Since $\vect{Y}$ is a function of $\vect{X}$, $H(\vect{Y})  \leq H(\vect{X}) = 145 \text{bit}$, where for the last equality, we assume uniformly distributed inputs. Thus, without knowing the state of any input variable, the network's state is likely to be in a set of size roughly $2^{145}$. Given the knowledge about the states $X_{\tau(1)},..., X_{\tau(l)}$, the state of the network is likely to be in a set of size roughly $2^{H(\vect{Y} | X_{\tau(1)},..., X_{\tau(l)})}$. For a large network, however, $H(\vect{Y} | X_{\tau(1)},..., X_{\tau(l)})$ is expensive to compute as by definition:
\begin{equation}
H(\vect{Y} | \XA ) = \sum_{\xa} \PR{ \XA = \xa } \label{eq:Adf} \cdot \sum_{\vect{y}}  \PR{\vect{Y} = \vect{y} | \XA = \xa } \log_2  \PR{\vect{Y} = \vect{y} | \XA = \xa }.
\end{equation}
Hence, the number of terms in the sum is exponential in $n$ and $|A|$. An estimate of \eqref{eq:Adf} can be obtained by sampling uniformly at random over $\xa$ and $\vect{y}$. Instead, we will consider the following upper bound which is computationally inexpensive to compute:
\[
H(\vect{Y} | X_{\tau(1)},..., X_{\tau(l)}  )  \leq  A(l)
\]
with
\[
A(l)\defeq \!\sum_{i=1}^m H(Y_i|  X_{\tau(1)},...,X_{\tau(l)} ).
\]
The bound above follows from the chain rule for entropy \cite{cover_elements_2006}. $H(Y_i|  X_{\tau(1)},...,X_{\tau(l)} ) $ is computationally inexpensive to compute, since $Y_i$ depends on few variables only (in the \textit{E. coli} network, on $\leq 8$). For the \textit{E. coli} network, $A(l)$ is depicted in Figure \ref{fig3} as a function of $l$. Figure \ref{fig3} shows that knowledge of the states of the most determinative nodes reduces the uncertainty about the network's states significantly. In fact, the upper bound $A(l)$ is loose; hence, we even expect  $H(\vect{Y} | X_{\tau(1)},..., X_{\tau(l)})$ to lie significantly below $A(l)$. Also, note that when $A(l)$ is small, $H(Y_i | X_{\tau(1)},..., X_{\tau(l)})$ must be small on average; hence, $\PR{Y_i = 1 | X_{\tau(1)},...,X_{\tau(l)}}$ is close to one or zero on average.

\begin{figure}
\begin{center}
\includegraphics[width=0.5\textwidth]{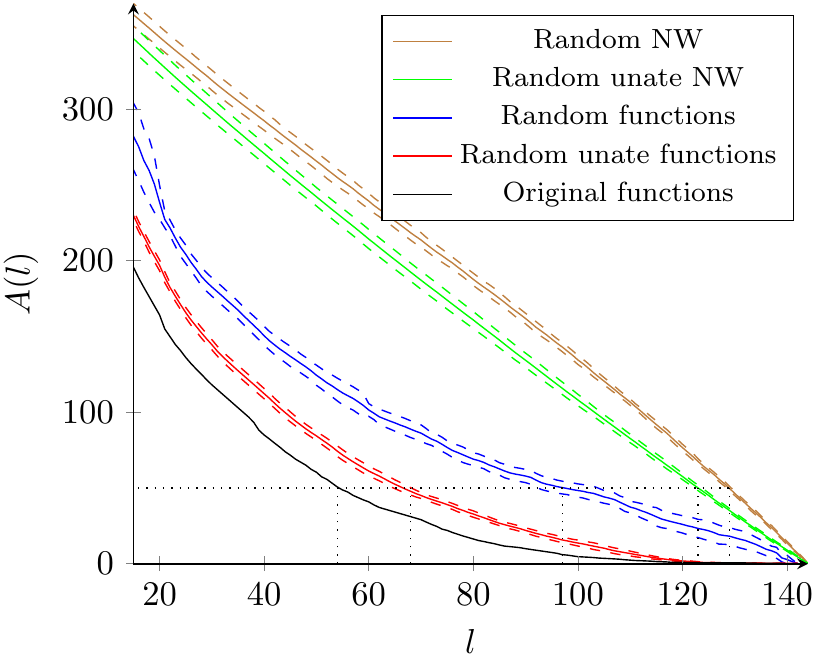}
\end{center}
\caption{The upper bound $\boldsymbol{A(l)}$ on $\boldsymbol{H(\vect{Y} | X_{\tau(1)},..., X_{\tau(l)}  )}$ as a function of $l$ for the \textit{E.~coli} network and random networks.} \label{fig3} 
\end{figure}

To put $A(l)$ for the \textit{E. coli} network in Figure \ref{fig3} into perspective, we compute $A(l)$ for random networks. First, we took the \textit{E. coli} network and exchanged each function with one chosen uniformly at random from the set of all Boolean functions of corresponding degree. We also exchanged each function with one chosen uniformly at random from all unate functions. We performed the same experiment for the original \textit{E. coli} network for 25 choices of random and random unate functions, respectively. The mean of $A(l)$, along with one standard deviation from the mean (dashed lines), is plotted in Figure \ref{fig3} for random and random unate functions. It is seen that fewer inputs determine the output of the original \textit{E. coli} network, compared to its random counterparts. For example, to obtain $A(l) = 50$, about twice as many inputs need to be known if the functions in the \textit{E. coli} network are exchanged for functions chosen uniformly at random.

Next, we generated at random feed forward networks with $m=653$ outputs and $n=145$ inputs, each with out-degree 8, i.e., the average out-degree of the inputs in the collapsed \textit{E. coli} network.
Again, we computed $A(l)$ for 25 choices of random and random unate functions, respectively.
The mean and one standard deviation from the mean are depicted in Figure \ref{fig3}. The results show that, as expected, for a random feed forward network, there seems to be no small set of inputs that determines the outputs.

\subsection{Tolerance to perturbations}
Finally, we discuss the average sensitivity of individual functions in the \textit{E. coli} network.
In section~\ref{sec:inf}, we found that the average sensitivity is small if the Fourier spectrum is concentrated on the coefficients of low degree. This appears to be the case for functions that are highly biased and for functions that depend on few variables only.
Figure \ref{fig4} shows pairs of values $(\AV{f},\text{Pr}[f(\vect{X})=1])$ for each function in the \textit{E. coli} network, again assuming that the $X_i$ are independent and uniformly distributed.
We can see from Figure \ref{fig4} that the average sensitivity of all functions is close to the lower bound on the average sensitivity. Note that the functions with high in-degree $K$ (i.e., number of relevant input variables), which could have average sensitivity up to $K$, also have small average sensitivity, because those functions are highly biased. We, therefore, can conclude that the functions have small average sensitivity either because they depend on few variables only or because they are highly biased. For other input distributions, i.e., other values of $p = \PR{X_i=1}, \forall i \in [n]$, we obtained the same results.

\begin{figure}[!h]
\begin{center}
\includegraphics[width=0.5\textwidth]{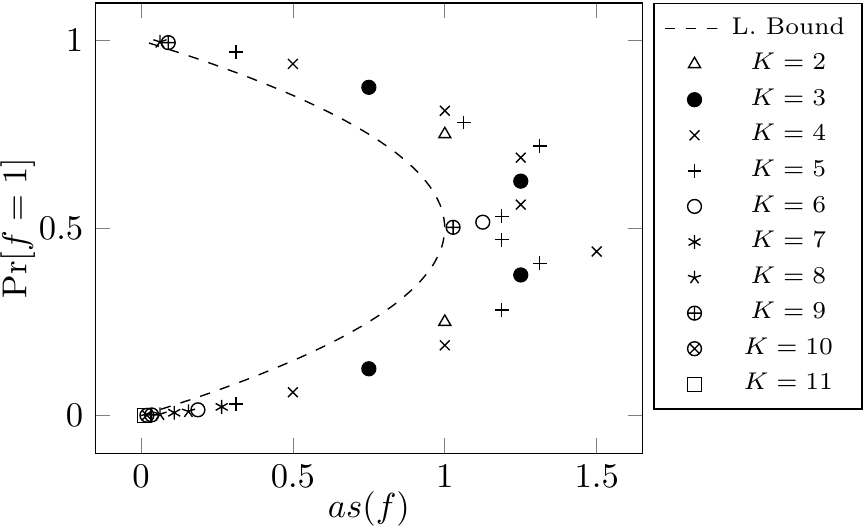}
\end{center}
\caption{Average sensitivity in the \textit{E{.} coli} network. Pairs of values $(\AV{f},\text{Pr}[f(\vect{X})=1])$ of each function in the \textit{E. coli} network for different in-degrees $K$ and uniformly distributed $\vect{X}$. Moreover, a lower bound on the average sensitivity $\AV{f}$, i.e., Poincare's inequality, is plotted.} \label{fig4} 
\end{figure}

\section{Conclusion}
In a Boolean network, tolerance to perturbations, determinative power, and statistical dependencies between nodes are properties of single functions in a probabilistic setting. Hence, we analyzed single functions with product distributed argument. We used Fourier analysis of Boolean functions to study the mutual information between a function $f(\vect{X})$ and a set of its inputs $\XA$, as a measure of determinative power of $\XA$ over $f(\vect{X})$. We related the mutual information to the Fourier spectrum and proved that the mutual information lower bounds the influence, a measure of perturbation. We also gave necessary and sufficient conditions for statistical independence of $f(\vect{X})$ and $\XA$. For the class of unate functions, which are particularly interesting for biological networks, we found that mutual information and influence are directly related (not just via an inequality). We also found that $\MI{f(\vect{X})}{X_i}>0$ for each relevant input $i$, which, as an application, implies that in a unate regulatory network, a gene and its regulator must be statistically dependent. As an application of our results, we analyzed the large-scale regulatory network of \textit{E. coli}. We identified the most determinative input nodes in the network and found that it is sufficient to know only a small subset of those in order to reduce the uncertainty of the overall network state significantly.
This, in turn, reduces the size of the state space in which the network is likely to be found significantly.

A possible direction for future work is to provide an analysis similar to that of the \textit{E. coli} regulatory network for other Boolean models of biological networks, and see if similar conclusions as in section~\ref{sec:ecoli} can be reached. One of the main assumptions in our work is the independence among the input variables of the network. It would be interesting to provide methods that can be used beyond this setup. 
However, deriving such results is challenging because for dependent inputs, the basis functions $\OBF_{S}(\vect{x})$ do not factorize as in \eqref{eq:basisfunc}, and many results cited and derived in this paper make use of this particular form of the basis functions. 
In this paper, we focused on generic properties of information-processing networks that may help identify possible principles that underly biological networks. Assessing our findings from a biological perspective would be an interesting next step.

%
%

\appendix

\section*{Appendices}

\section{\label{ap:le:relation_fourier_Evalue_general}  Lemma~\ref{le:relation_fourier_Evalue_general} }

For the proof of Theorems~\ref{th:relation_fourier_H} and ~\ref{le:statistical_independence_fourier_coeffs},
we will need the following lemma:

\newcommand{\xaa}{\xa}
\begin{lemma}
  Let $f$ be a Boolean function, let $\vect{X}$ be product distributed, and let $A\subseteq [n]$ and some fixed $\vect \xaa \in \{-1,+1\}^{|A|}$ be given.
    Then,
    \begin{equation}
    \EX{ f (\vect X) |  \vect{X}_A = \xaa } = \sum_{S \subseteq A} \PFC (S)  \OBF_{S}(\xaa).
    \label{lemma1:eq1}
    \end{equation}
    \label{le:relation_fourier_Evalue_general}
\end{lemma}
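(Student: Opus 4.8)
The plan is to start from the multilinear Fourier expansion \eqref{eq:fourier_expansion} and push the conditional expectation inside the sum. By linearity of (conditional) expectation,
\[
\EX{ f(\vect{X}) | \vect{X}_A = \xa } = \sum_{S\subseteq[n]} \PFC(S)\, \EX{ \OBF_S(\vect{X}) | \vect{X}_A = \xa },
\]
so the entire task reduces to evaluating the conditional expectation of each basis function $\OBF_S$.

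For a fixed $S$, I would split it into its part inside $A$ and its part outside, writing $S = (S\cap A)\cup(S\setminus A)$, and use the product form \eqref{eq:basisfunc} of the basis functions (the decomposition property) to factor $\OBF_S(\vect{x}) = \OBF_{S\cap A}(\vect{x})\,\OBF_{S\setminus A}(\vect{x})$. The first factor depends only on the coordinates indexed by $A$, which are frozen by the conditioning, so it equals the constant $\OBF_{S\cap A}(\xa)$. The second factor depends only on the coordinates outside $A$, which, because $\vect{X}$ is product distributed, remain independent of $\vect{X}_A$ and retain their original marginals under the conditioning. Hence
\[
\EX{ \OBF_S(\vect{X}) | \vect{X}_A = \xa } = \OBF_{S\cap A}(\xa)\,\EX{ \OBF_{S\setminus A}(\vect{X}) }.
\]

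The next step is to observe that every basis function indexed by a nonempty set has zero mean: since $\OBF_{S\setminus A}(\vect{X}) = \prod_{i\in S\setminus A}\tfrac{X_i-\mu_i}{\sigma_i}$ is a product of independent mean-zero factors, $\EX{ \OBF_{S\setminus A}(\vect{X}) } = 0$ whenever $S\setminus A\neq\emptyset$, while it equals $1$ when $S\setminus A = \emptyset$, i.e. when $S\subseteq A$. (Equivalently, this is orthonormality tested against $\OBF_\emptyset\equiv 1$.) Therefore every term with $S\not\subseteq A$ vanishes, and for the surviving terms $S\subseteq A$ we have $S\cap A = S$, so $\OBF_{S\cap A}(\xa)=\OBF_S(\xa)$. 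Substituting back yields exactly \eqref{lemma1:eq1}.

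I expect no serious obstacle; the only point requiring care is the factorization-plus-independence step, where one must justify that conditioning on $\vect{X}_A=\xa$ leaves the off-$A$ coordinates governed by their original product law. This is precisely where the product-distribution hypothesis enters, and it is what allows the cross terms to collapse to the mean-zero computation that selects the subsets $S\subseteq A$.
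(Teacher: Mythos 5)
Your proposal is correct and follows essentially the same route as the paper's proof: expand $f$ via \eqref{eq:fourier_expansion}, use linearity of conditional expectation, and then use the product structure of the basis functions together with independence to show that $\EX{\OBF_S(\vect{X}) \mid \vect{X}_A = \xa}$ equals $\OBF_S(\xa)$ for $S \subseteq A$ and vanishes otherwise. If anything, your explicit split $S = (S\cap A)\cup(S\setminus A)$ is stated a bit more carefully than the paper's ``without loss of generality $S = A \cup \{j\}$'' step, but the underlying argument is identical.
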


\begin{proof}
Inserting the Fourier expansion of $f(\vect{X})$ given by \eqref{eq:fourier_expansion} in the left-hand side of \eqref{lemma1:eq1} and utilizing the linearity of conditional expectation yields
\begin{equation*}
\EX{ f (\vect X) |  \vect{X}_A = \xaa } = \sum_{S\subseteq[n]} \hat f(S) ~ \EX{ \OBF_{S} (\vect X) |  \vect{X}_A = \xaa}.
\end{equation*}

For $S\subseteq A$,
\begin{equation*}
\EX{ \OBF_{S} (\vect X) | \vect{X}_A = \xaa } = \OBF_{S} (\xaa).
\end{equation*}
Conversely, for $S \not\subseteq A$,
    \[
    \EX{ \OBF_{S} (\vect X) | \vect{X}_A = \xaa } = 0.
    \]
To see this, assume without loss of generality that $S = A \cup \{j\}$ and $j \notin A$. Using the decomposition property of the basis function as given in section~\ref{sec:fanalysisintro},
\begin{align*}
\EX{ \OBF_{S} (\vect X) | \vect{X}_A = \xaa }
&= \mathbb E \left[  \prod_{i\in S} \Phi_{\{i\}}(\vect X) | \vect{X}_A = \xaa \right]  \\
& = \prod_{i\in S} \EX{ \Phi_{\{i\}}(\vect X) | \vect{X}_A = \xaa }
\end{align*}
which is equal to zero as
\begin{equation*}
\EX{ \Phi_{\{j\}}(\vect X) | \vect{X}_A = \xaa } =  \EX{ \Phi_{\{j\}}(\vect X)} = 0.
\end{equation*}
\end{proof}

\section{
\label{proof:relation_fourier_H}
Proof of Theorem~\ref{th:relation_fourier_H} }

First,
\begin{align}
&\PR{f(\vect{X})=1|\vect{X}_A = \vect{x}_A} = \frac{1}{2} \left(  1 + \EX{ f (\vect X) |  \vect{X}_A = \vect x_A} \right) \nonumber \\
&\hspace{1.2cm}=
\underbrace{
\frac{1}{2} \left(  1 +\sum_{S \subseteq A} \PFC(S) \OBF_{S}(\xa) \right)
}_{q(\xa)}, \label{eq:defqew}
\end{align}
where \eqref{eq:defqew} follows from an application of Lemma~\ref{le:relation_fourier_Evalue_general}.
By definition of the conditional entropy,
\begin{align}
H(f(\vect{X}) | \vect{X}_{A}) &= \sum_{\vect{x}_A \in \{-1,1\}^{|A|}}  \PR{\vect{X}_A = \vect{x}_A}  H ( f(\vect{X}) |  \vect{X}_A=\vect{x}_A )   \nonumber\\
&=\sum_{\vect{x}_A \in \{-1,1\}^{|A|}}  \PR{\vect{X}_A = \vect{x}_A}  h ( \PR{f(\vect{X})=1 |  \vect{X}_A=\vect{x}_A} )  \nonumber \\
&= \sum_{\vect{x}_A \in \{-1,1\}^{|A|}}  \PR{\vect{X}_A = \vect{x}_A}  h ( q(\xa) ) \label{eq:usedefeqw}\\
&= \EX{ h ( q(\XA) )  },
\label{eq:eeexpect}
\end{align}
where $h(\cdot)$ is the binary entropy function as defined in \eqref{eq:BEF}. To obtain \eqref{eq:usedefeqw}, we used \eqref{eq:defqew}. The expectation in \eqref{eq:usedefeqw} is with respect to the distribution of  $\vect{X}_A$. Inserting $q(\XA)$ as given by \eqref{eq:defqew} in \eqref{eq:eeexpect} concludes the proof.

\section{
\label{proof_pr7}
Proof of Theorem~\ref{th:hlowerbound}}

First, note that with $q(\cdot)$ as defined in \eqref{eq:defqew}, we have
\begin{align}
\EX{4q(\XA)(1-q(\XA))} &= \mathbb E \left[   1- \left( \sum_{S \subseteq A} \PFC(S) \OBF_{S}(\mathbf{\XA})\right)^2 \right] \notag \\
&= \sum_{S \subseteq A} \sum_{U \subseteq A} \PFC(S)\PFC(U) \EX{  \OBF_{S}(\XA) \OBF_{U}(\XA)} \notag \\
&= \label{eq:psumcoeff} 1- \sum_{S \subseteq A} \PFC (S)^2,
\end{align}
where \eqref{eq:psumcoeff} follows from the orthogonality of the basis functions.

We start with proving the lower bound in Theorem~\ref{th:hlowerbound}. Applying the lower bound on the binary entropy function $h(p) \geq 4  p  (1-p)$, given in Theorem 1.2 of \cite{topsoe_bounds_2001}, on \eqref{eq:eeexpect} yields
\[
 H(f(\vect{X}) | \XA ) = \EX{h(q(\XA))} \geq \EX{4  q(\XA)  (1-q(\XA))},
\]
and the lower bound in  Theorem~\ref{th:hlowerbound} follows using \eqref{eq:psumcoeff}.

Next, we prove the upper bound in Theorem~\ref{th:hlowerbound}. Applying the upper bound on the binary entropy function $h(p) \leq \left( p(1-p) \right)^{1/\ln(4)}$, given in Theorem 1.2 of \cite{topsoe_bounds_2001}, on \eqref{eq:eeexpect} yields
\begin{align}
 H(f(\vect{X}) | \XA ) &= \EX{h(q(\XA)} \nonumber\\
              &\leq \EX{ (\, \underbrace{4 q(\XA)  (1-q(\XA) )}_{Y} \, )^{1/\ln(4)}}.
\label{eq:upperboundh}
\end{align}
The term $Y$ in \eqref{eq:upperboundh} is a random variable, and the function $(Y)^{1/\ln(4)}$ is concave in $Y$. An application of Jensen's inequality (see e.g. \cite{cover_elements_2006}) yields $\EX{ (Y)^{1/\ln(4)}   } \leq ( \EX{Y} )^{1/\ln(4)}$; hence, the right-hand side of \eqref{eq:upperboundh} can be lower as
\begin{equation}
 H(f(\vect{X}) | \XA ) \leq \left( \EX{ 4 q(\XA) (1-  q(\XA) )}  \right)^{1/\ln(4)}.
\label{eq:lasteqinappe}
\end{equation}
Finally, the upper bound in Theorem~\ref{th:hlowerbound} follows from combining \eqref{eq:lasteqinappe} and \eqref{eq:psumcoeff}.

\section{
\label{proof_pr5}
Proof of Theorem~\ref{pr:inf_bound_mi}}

According to Proposition~\ref{pr:av_fourier},
 \begin{align}
  \INF{A}{f} &= \sum_{S \subseteq [n]} \PFC(S)^2 \sum_{i \in S \cap A} \frac{1}{\sigma_i^2} \nonumber \\
&\geq \sum_{S \subseteq [n] \setminus \emptyset} \PFC(S)^2 |S \cap A| \underset{i\in A}{\min} \left(\frac{1}{\sigma_i^2} \right) \nonumber \\
&\geq \underset{i\in A}{\min}\left(\frac{1}{\sigma_i^2} \right) \sum_{S \subseteq A \setminus \emptyset}  \PFC(S)^2. \label{eq:prop5qpr}
 \end{align}
Next, we rewrite the lower bound on $H(f(\vect{X}) | \XA )$ given by Theorem~\ref{th:hlowerbound} as
\begin{equation}
 \sum_{S\subseteq A \setminus \emptyset} \PFC (S)^2 \geq 1- \PFC (\emptyset)^2 -H(f(\vect{X}) | \XA ).
 \label{eq:rewrlbthmh}
\end{equation}
By adding $H(f(\vect{X}))-H(f(\vect{X}))$ on the right-hand side of \eqref{eq:rewrlbthmh} and using the definition of mutual information, \eqref{eq:rewrlbthmh} becomes
\begin{equation}
 \sum_{S\subseteq A \setminus \emptyset}  \!\! \PFC (S)^2  \! \geq \! \MI{f(\vect{X})}{\XA} - H(f(\vect{X})) + 1- \PFC (\emptyset)^2.
 \label{eq:rewrlbthmh2}
\end{equation}
With $\VAR{f(\vect{X})} = 1-\PFC(\emptyset)^2$ and by using the inequality $H(f(\vect{X})) \leq ( \VAR{f(\vect{X})})^{1/\ln(4)}$, given in Theorem 1.2 of \cite{topsoe_bounds_2001}, \eqref{eq:rewrlbthmh2} becomes
\begin{equation}
\sum_{S\subseteq A \setminus \emptyset} \PFC (S)^2 \geq \MI{f(\vect{X})}{\XA} - \Psi(\VAR{f(\vect{X})}),
\label{eq:prop5qpr2}
\end{equation}
with $\Psi(\cdot)$ as defined in \eqref{eq:errorpsi}. Finally, Theorem~\ref{pr:inf_bound_mi} follows by combining \eqref{eq:prop5qpr} and \eqref{eq:prop5qpr2}.

\section{
\label{proof_pr4}
Proof of Theorem~\ref{pr:inf_infortheor}}

For notational convenience, let $A = [n]\setminus \{i\}$. By definition of the conditional entropy,
\begin{align}
\label{eq:a1}
H(f(\vect{X}) | \vect{X}_{A}) &= \!\! \sum_{\vect{x}_A \in \{-1,1\}^{|A|}} \!\!  \PR{\vect{X}_A = \vect{x}_A}  H ( f(\vect{X}) |  \vect{X}_A=\vect{x}_A )  \nonumber \\
&= \!\! \sum_{\vect{x}_A \in \{-1,1\}^{|A|}}  \!\! \PR{\vect{X}_A = \vect{x}_A}  h ( \PR{f(\vect{X})=1 |  \vect{X}_A=\vect{x}_A}),
\end{align}
where $h(\cdot)$ is the binary entropy function as defined in \eqref{eq:BEF}.
Observe that
\[
h ( \PR{f(\vect{X})=1 |  \vect{X}_A=\vect{x}_A }) = h( \PR{X_i=1} )
\]
if
 \begin{align}
& f( X_1 =x_{1}, ..., X_{i} = 1,...,  X_{n}=x_n) \nonumber \\
& \neq f( X_{1}=x_1, ..., X_{i} = -1,..., X_n = x_{n}) \nonumber
 \end{align}
  and
 \[
 h ( \PR{f(\vect{X})=1 |  \vect{X}_A=\vect{x}_A }) =0
 \]
otherwise. Hence, \ref{eq:a1} becomes
\begin{align}
H(f(\vect{X}) | \XA) =\sum_{\vect{x}_A \in \{-1,1\}^{|A|}}  \PR{\vect{X}_A = \vect{x}_A}     h(p_i)
 \ind{f(\vect{X}) \neq f(\vect{X} \oplus e_i) }, \nonumber
\end{align}
where $\vect{x}\oplus e_i$ is the vector obtained from $\vect{x}$ by flipping its $i$th entry, and Theorem~\ref{pr:inf_infortheor} follows by using the definition of the influence.

\section{
\label{proof_pr6}
Proof of Theorem~\ref{le:statistical_independence_fourier_coeffs}}

By definition, $f(\vect{X})$ and $\vect{X}_A$ are statistically independent if and only if for all $\vect{x}_A \in \{-1,+1\}^{|A|}$
\begin{equation}
 \PR{f(\vect{X})=1|\vect{X}_A = \vect{x}_A}=\PR{f(\vect{X})=1}.
 \label{eq:co1}
\end{equation}
With
\[
 \PR{f(\vect{X})=1|\vect{X}_A = \vect{x}_A} = \frac{1}{2} + \frac{1}{2} \EX{ f (\vect X) |  \vect{X}_A = \vect x_A}
\]
and application of Lemma~\ref{le:relation_fourier_Evalue_general} given in Appendix 1, \eqref{eq:co1} becomes
\begin{align}
\sum_{S \subseteq A} \PFC (S)  \OBF_{S}
( \mathbf{\vect{x}_A })  =   \PFC (\emptyset)  \nonumber \\
\Leftrightarrow \sum_{S \subseteq A \setminus \emptyset } \PFC (S)  \OBF_{S} ( \mathbf{\vect{x}_A}) = 0.
\label{eq:statfourierindependen}
\end{align}
It follows from the Fourier expansion \eqref{eq:fourier_expansion} that \eqref{eq:statfourierindependen} holds for all $\vect{x}_A \in \{-1,+1\}^{|A|}$ if and only if $\PFC(S) = 0$ for all $S\subseteq A \setminus \emptyset$, which proves the theorem.

\section*{Acknowledgments}
We would like to thank Sara Al-Sayed and Dejan Lazich for their helpful discussions and careful reading of the manuscript.

\end{document}